\newcommand{\iy}{\infty}
\newcommand{\st}{\ : \ }
\DeclareMathOperator{\tr}{Tr}
\DeclareMathOperator{\Prob}{Prob}
\renewcommand{\leq}{\leqslant}
\renewcommand{\geq}{\geqslant}
\renewcommand{\subset}{\subseteq}
\newcommand{\gC}{\mathsf{C}}
\renewcommand{\P}{\mathbf{P}}
\newcommand{\R}{\mathbf{R}}
\newcommand{\CC}{\mathbf{C}}
\newcommand{\N}{\mathbf{N}}
\DeclareMathOperator{\conv}{\mathrm{conv}}
\DeclareMathOperator{\cone}{\mathrm{cone}}
\DeclareMathOperator{\card}{\mathrm{card}}
\DeclareMathOperator{\mathspan}{\mathrm{span}}
\DeclareMathOperator{\aff}{\mathrm{aff}}
\newcommand{\ketbra}[2]{| #1 \rangle \langle  #2 |}
\newcommand{\M}{\mathsf{M}}
\newcommand{\Id}{\mathrm{Id}}
\def\beq{\begin{equation}}
\def\eeq{\end{equation}}
\def\bq{\begin{quote}}
\def\eq{\end{quote}}
\def\ben{\begin{enumerate}}
\def\een{\end{enumerate}}
\def\bit{\begin{itemize}}
\def\eit{\end{itemize}}
\def\lb{\left(}
\def\rb{\right)}
\def\lset{\lbrace}
\def\rset{\rbrace}
\def\r|{\right|}
\newcommand{\Ext}{\mathsf{Ext}}
\newcommand{\Sym}{\mathrm{Sym}}
\DeclareMathOperator{\inter}{int}
\DeclareMathOperator{\Tr}{Tr}
\DeclareMathOperator{\Av}{Av}
\newcommand{\ket}[1]{| #1 \rangle}
\theoremstyle{plain}
\newtheorem{theorem}{Theorem}
\newtheorem{corollary}[theorem]{Corollary}
\newtheorem{proposition}[theorem]{Proposition}
\newtheorem{lemma}[theorem]{Lemma}
\newtheorem{remark}{Remark}
\newcommand{\tmin}{{\otimes_{\min}}}
\newcommand{\tmax}{{\otimes_{\max}}}
\title{Monogamy of entanglement between cones}
\author{Guillaume Aubrun}
\address{\small{Institut Camille Jordan, Universit\'{e} Claude Bernard Lyon 1, 43 boulevard du 11 novembre 1918, 69622 Villeurbanne cedex, France}}
\address{\small{Univ.\ Lyon, ENS Lyon, UCBL, CNRS, Inria, LIP, F-69342, Lyon Cedex 07,
France.}}
\email{aubrun@math.univ-lyon1.fr}
\author{Alexander M\"uller-Hermes}
\address{\small{Department of Mathematics, University of Oslo, P.O. box 1053, Blindern, 0316 Oslo, Norway}}
\email{muellerh@math.uio.no}
\author{Martin Pl\'avala}
\address{\small{Naturwissenschaftlich-Technische Fakult\"{a}t, Universit\"{a}t Siegen, 57068 Siegen, Germany}}
\email{martin.plavala@uni-siegen.de}
\keywords{Entanglement, Tensor products of cones, Products of simplices}
\subjclass{52B11, 81P16}
\begin{document}

\maketitle

\begin{abstract}
    A separable quantum state shared between parties $A$ and $B$ can be symmetrically extended to a quantum state shared between party $A$ and parties $B_1,\ldots ,B_k$ for every $k\in\N$. Quantum states that are not separable, i.e., entangled, do not have this property. This phenomenon is known as ``monogamy of entanglement''. We show that monogamy is not only a feature of quantum theory, but that it characterizes the minimal tensor product of general pairs of convex cones $\gC_A$ and $\gC_B$: The elements of the minimal tensor product $\gC_A\otimes_{\min} \gC_B$ are precisely the tensors that can be symmetrically extended to elements in the maximal tensor product $\gC_A\otimes_{\max} \gC^{\otimes_{\max} k}_B$ for every $k\in\N$. Equivalently, the minimal tensor product of two cones is the intersection of the nested sets of $k$-extendible tensors. It is a natural question when the minimal tensor product $\gC_A\otimes_{\min} \gC_B$ coincides with the set of $k$-extendible tensors for some finite $k$. We show that this is universally the case for every cone $\gC_A$ if and only if $\gC_B$ is a polyhedral cone with a base given by a product of simplices. Our proof makes use of a new characterization of products of simplices up to affine equivalence that we believe is of independent interest.      
\end{abstract}

\section{Introduction}

Let $V$ denote a finite-dimensional vector space over the reals and $\gC\subset V$ a convex cone. We say that the cone $\gC$ is \emph{proper} if it is closed, does not contain any line, and is not contained in any hyperplane. Any proper cone $\gC\subset V$ can be described in terms of its dual $\gC^*\subset V^*$, given by 
\[
\gC^* = \lset f\in V^*~:~f(x)\geq 0 \text{ for any }x\in \gC\rset ,
\]
since by the bipolar theorem we have $(\gC^*)^*=\gC$. While any proper cone can be described from the inside by its extremal rays, the bipolar theorem describes the cone from the outside as an intersection of half-spaces. This duality is fundamental in the study of convex cones and it is a basic problem in convex analysis to switch between these two descriptions.

We will be interested in the two canonical tensor products of any pair of proper cones. Given proper cones $\gC_A\subset V_A$, $\gC_B\subset V_B$, we may form the \emph{minimal tensor product}
\[
\gC_A \tmin \gC_B = \conv\lset x\otimes y~:~x\in \gC_A,\ y\in \gC_B\rset\subset V_A\otimes V_B ,
\]
and the \emph{maximal tensor product} 
\[
\gC_A \tmax \gC_B = (\gC^*_A\tmin \gC^*_B)^* = \lset z\in V_A\otimes V_B~:~ (f\otimes g)(z)\geq 0,\ f\in \gC_A^*,\ g\in \gC_B^* \rset .
\]
For proper cones $\gC_A$ and $\gC_B$ both tensor products are again proper cones. Moreover, the tensor products may be iterated and since they are associative we may define the $k$-fold versions $\gC^{\otimes_{\min} k}$ and $\gC^{\otimes_{\max} k}$ for any proper cone $\gC$. The two tensor products reflect the aforementioned descriptions of convex cones: The minimal tensor product is described by specifying generating elements and the maximal tensor product by an intersection of half-spaces. Moreover, their discrepancy can be seen as a general form of quantum entanglement \cite{ALPP21}.

A central feature of quantum physics is known as \emph{monogamy of entanglement}. In our language, it can be described using the cones of positive semidefinite $d\times d$ matrices $\M^+_d$ with complex entries, which is a proper cone inside the real vector space~$\M_d^{\mathrm{sa}}$ of Hermitian $d\times d$ matrices. A bipartite quantum state is given by a matrix $\rho_{AB}\in \M_{d_Ad_B}^+ \simeq (\M_{d_A}\otimes \M_{d_B})^+ \subsetneq \M_{d_A}^+ \tmax \M_{d_B}^+$ for some integers ${d_A,d_B \geq 2}$ and it is called \emph{entangled} if it does not belong to $\M^+_{d_A}\otimes_{\min} \M^+_{d_B}$. The \emph{monogamy theorem} \cite{DPS04, Yang06} asserts that a quantum state $\rho_{AB}\in (\M_{d_A}\otimes \M_{d_B})^+$ is entangled if and only if there exists an integer $k \geq 2$ for which no quantum state $\sigma_{AB_1\cdots B_k}\in (\M_{d_A}\otimes \M^{\otimes k}_{d_B})^+$ satisfies the equation
\[
\rho_{AB} = \lb\Id_{\displaystyle \M^{\mathrm{sa}}_{d_A}}\otimes \gamma^{\Tr}_k\rb\lb \sigma_{AB_1\cdots B_k}\rb ,
\]
where
\[
\gamma^{\Tr}_k = \frac{1}{k} \sum_{j=1}^k \Tr^{\otimes (j-1)} \otimes \Id_{\displaystyle \M^{\mathrm{sa}}_{d_B}} \otimes \Tr^{\otimes (k-j)}
\]
corresponds to (symmetrically) discarding all but one out of the $k$ tensor factors labelled by $B_1,\ldots, B_k$. Physically, this means that a bipartite quantum state is entangled if and only if it cannot be partially shared with arbitrarily many parties. Our first main result establishes a similar property for general minimal and maximal tensor products.

\begin{theorem}\label{thm:Main1}
Consider proper cones $\gC_A\subset V_A$ and $\gC_B\subset V_B$ and a linear form $\phi$ in the interior of $\gC_B^*$. Then, we have
\begin{equation} \label{eq:main-equation}
\gC_A \tmin \gC_B = \bigcap_{ k \geq 1}  \left( \Id_{V_A} \otimes \gamma_k^\phi \right) \left(\gC_A \tmax \gC_B^{\tmax k} \right),
\end{equation}
where, for an integer $k \geq 1$, the \emph{$k$th reduction map} $\gamma_k^\phi : V_B^{\otimes k} \to V_B$ is defined as
\[ \gamma_k^\phi = \frac{1}{k} \sum_{j=1}^k  \phi^{\otimes (j-1)} \otimes \Id_{V_B} \otimes \phi^{\otimes (k-j)}. \]
\end{theorem}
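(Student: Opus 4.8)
The plan is to prove the two inclusions in \eqref{eq:main-equation} separately; throughout write $R_k := \Id_{V_A}\otimes\gamma_k^\phi$ and $\M_k := R_k\big(\gC_A\tmax\gC_B^{\tmax k}\big)$ for the $k$-th term on the right. \emph{Easy inclusion} ($\subseteq$): it suffices to treat a generator $x\otimes y$ with $x\in\gC_A$, $y\in\gC_B$, $y\neq 0$. Since $\phi\in\inter\gC_B^*$ we have $\phi(y)>0$; setting $\tilde y := y/\phi(y)$ so that $\phi(\tilde y)=1$, a one-line computation gives $\gamma_k^\phi(\tilde y^{\otimes k})=\tilde y$, hence $R_k\big(\phi(y)\,x\otimes\tilde y^{\otimes k}\big)=x\otimes y$. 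As $x\otimes\tilde y^{\otimes k}\in\gC_A\tmin\gC_B^{\tmin k}\subseteq\gC_A\tmax\gC_B^{\tmax k}$, this exhibits $x\otimes y\in\M_k$ for every $k$; extending a generic $\sum_i x_i\otimes y_i$ termwise and using that $\M_k$ is a convex cone yields $\gC_A\tmin\gC_B\subseteq\M_k$ for all $k$.

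For the \emph{hard inclusion} ($\supseteq$), fix $\psi\in\inter\gC_A^*$ and let $z$ lie in every $\M_k$; I want $z\in\gC_A\tmin\gC_B$. For each $k$ pick $\sigma_k\in\gC_A\tmax\gC_B^{\tmax k}$ with $R_k(\sigma_k)=z$. Since $\gamma_k^\phi$ is invariant under permuting the $k$ factors, averaging $\sigma_k$ over these permutations leaves $R_k(\sigma_k)$ unchanged, so I may assume each $\sigma_k$ is symmetric in the $B$-factors. Two identities drive the argument: $\phi\circ\gamma_k^\phi=\phi^{\otimes k}$, which (after discarding the trivial case $z=0$ and rescaling, using that $\psi\otimes\phi$ is strictly positive on $\gC_A\tmax\gC_B\ni z$) normalizes all $\sigma_k$ to $(\psi\otimes\phi^{\otimes k})(\sigma_k)=1$; and symmetry, which forces the one-site reduction $(\Id_{V_A}\otimes\Id_{V_B}\otimes\phi^{\otimes(k-1)})(\sigma_k)$ to equal $z$ exactly. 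For each fixed $\ell$ the $\ell$-site reductions $\sigma_k^{(\ell)}:=(\Id_{V_A}\otimes\Id_{V_B}^{\otimes\ell}\otimes\phi^{\otimes(k-\ell)})(\sigma_k)$ lie in the compact normalized slice of the proper cone $\gC_A\tmax\gC_B^{\tmax\ell}$, so a diagonal extraction produces a subsequence along which $\sigma_k^{(\ell)}\to\rho_\ell$ for every $\ell$. The limits form a consistent, $B$-exchangeable family with $\rho_\ell\in\gC_A\tmax\gC_B^{\tmax\ell}$ and $\rho_1=z$.

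It then remains to prove a de Finetti representation: such a family satisfies $z=\rho_1\in\gC_A\tmin\gC_B$. My plan is to reduce this to the classical Hewitt--Savage theorem. Choose informationally complete measurements $\{h_a\}\subset\gC_A^*$ and $\{g_i\}\subset\gC_B^*$ with $\sum_a h_a=\psi$ and $\sum_i g_i=\phi$ (available since $\psi,\phi$ are interior points of the dual cones). Because each $\rho_\ell$ lies in a \emph{maximal} tensor product, the numbers $P_\ell(a,i_1,\dots,i_\ell):=(h_a\otimes g_{i_1}\otimes\cdots\otimes g_{i_\ell})(\rho_\ell)$ are nonnegative and define a consistent, exchangeable family of genuine probability distributions; Hewitt--Savage then represents them as a mixture $\int F(a,q)\prod_t q(i_t)\,d\nu(q)$ of conditionally i.i.d.\ laws. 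Informational completeness lets me invert the frames to obtain candidate elements $\xi_q\in V_A$ and $y_q\in V_B$ reproducing these statistics, after which $z=\rho_1$ takes the form $\int \xi_q\otimes y_q\,d\nu(q)$.

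The main obstacle is the final positivity step: I must certify that the reconstructed components genuinely satisfy $\xi_q\in\gC_A$ and $y_q\in\gC_B$, so that the integral lands in the closed convex cone $\gC_A\tmin\gC_B$. A single informationally complete measurement does not by itself witness membership in a cone, so here I expect to exploit the full strength of the maximal tensor product together with a law-of-large-numbers argument: testing site $1$ with an \emph{arbitrary} functional $g\in\gC_B^*$ while measuring the remaining sites with the frame gives nonnegative statistics, and conditioning on the empirical frequency of the later outcomes concentrating at $q$ isolates $g(y_q)\geq 0$; as $g$ ranges over $\gC_B^*$ this forces $y_q\in(\gC_B^*)^*=\gC_B$, and symmetrically $\xi_q\in\gC_A$. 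Establishing this de Finetti-with-positive-components statement cleanly --- equivalently, the dual assertion that a witness $W\in\gC_A^*\tmax\gC_B^*$ strictly positive on products becomes separable (i.e.\ lands in $\gC_A^*\tmin\gC_B^{*\tmin k}$) once spread symmetrically over many copies of $B$ --- is where the real work lies; the two inclusions above and the compactness reduction are comparatively routine.
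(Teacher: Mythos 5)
Your overall architecture matches the paper's: the easy inclusion via $y_k=\sum_i a_i\otimes b_i^{\otimes k}$ is identical, and your symmetrization-plus-diagonal-extraction step producing a consistent exchangeable family $(\rho_\ell)$ with $\rho_1=z$ is exactly the paper's Lemma~\ref{lemma:compatible-extension} (compatible sequences). Reducing to the classical (conditional) de~Finetti theorem via a frame $\{h_a\},\{g_i\}$ resolving $\psi$ and $\phi$ is also in the spirit of the paper's Proposition~\ref{proposition:main-compatible-extensions}, which likewise passes to simplicial cones to invoke the classical theorem.

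However, there is a genuine gap, and you have located it yourself: the final positivity step is not proved. Two things are actually missing. First, "informational completeness lets me invert the frames" presupposes that the de~Finetti measure $\nu$ on $\Prob(F)$ is supported on the image of $\aff(K_\phi)$ under $y\mapsto(g_i(y))_i$; a priori $\nu$ could charge points of the outcome simplex with no preimage in $V_B$ at all, so $y_q$ need not exist. (This can be rescued via the moment conditions, but it is an argument you would have to make.) Second, even granting well-defined $\xi_q,y_q$, the law-of-large-numbers/conditioning sketch for $g(y_q)\geq 0$ is precisely the nontrivial content of the Barrett--Leifer de~Finetti theorem, and you defer it rather than prove it; as written, the proof does not close. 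The paper avoids concentration arguments entirely: it proves a \emph{uniqueness} statement for the representing pair $(\pi,\alpha)$ in the simplicial case, then encloses $K_\phi$ (resp.\ $\gC_A$) in varying simplices (resp.\ simplicial cones) and uses Lemma~\ref{lemma:intersection-of-simplices} --- every convex body is an intersection of simplices --- to force the measure to be supported on $K_\phi$ and $\alpha$ to be $\gC_A$-valued almost surely. Adopting that uniqueness-plus-intersection device (or explicitly importing the Barrett--Leifer theorem) is what is needed to turn your outline into a complete proof.
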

Our Theorem \ref{thm:Main1} shows that the monogamy property of quantum entanglement is actually a common feature of the most general forms of entanglement for any pair of cones. The question whether Theorem \ref{thm:Main1} holds has been asked in \cite[\S 2.4.1]{lamiatesi}. Even for cones of positive semidefinite matrices our results are new: They show that entanglement cannot be partially shared with arbitrarily many parties even if we allow for unphysical quantum states represented in the maximal tensor product. 

Equation \eqref{eq:main-equation} produces a decreasing sequence of outer approximations to the minimal tensor product, called the \emph{extendability hierarchy}. A natural question is whether this sequence stops after finitely many steps. We answer this question in terms of combinatorial properties of the base $K_\phi = \gC_B \cap \phi^{-1}(1)$. We have the following theorem:

\begin{theorem}\label{thm:Main2}
Consider a proper cone $\gC_B\subset V_B$ and a linear form $\phi$ in the interior of $\gC_B^*$. Then, for every integer $k \geq 1$, the following are equivalent:
\begin{enumerate}
    \item[(i)] For every proper cone $\gC_A\subset V_A$, we have the identity
    \[ \gC_A \tmin \gC_B =  \left( \Id_{V_A} \otimes \gamma_k^\phi \right) \left(\gC_A \tmax \gC_B^{\tmax k} \right) ;\]
    \item[(ii)] The base $K_\phi$ is affinely equivalent to a Cartesian product of $\leq k$ simplices.
\end{enumerate}
\end{theorem}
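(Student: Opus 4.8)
Both conditions are invariant under affine equivalence of the base: an invertible affine map on $K_\phi$ corresponds to rescaling $\phi$ and applying an invertible linear map on $V_B$, which intertwines $\tmin$, $\tmax$ and the reduction maps $\gamma_k^\phi$. So I would first normalise and, in direction (ii)$\Rightarrow$(i), assume outright that $K_\phi=\Delta_{n_1}\times\cdots\times\Delta_{n_m}$ with $m\le k$. The bridge between the geometry of $K_\phi$ and the collapse of the hierarchy will be a standalone \emph{characterisation of products of simplices up to affine equivalence} (the ingredient flagged in the abstract), used chiefly in the hard direction. Finally, the case $k=1$ is already the classical nuclearity statement: since $\gamma_1^\phi=\Id_{V_B}$, condition (i) reads $\gC_A\tmin\gC_B=\gC_A\tmax\gC_B$ for all $\gC_A$, which holds precisely when $\gC_B$ is simplicial, i.e.\ when $K_\phi$ is a single simplex; this is the base case for the induction below.

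\textbf{Direction (ii)$\Rightarrow$(i).} By Theorem~\ref{thm:Main1} the intersection over all $k$ equals $\gC_A\tmin\gC_B$, so $\gC_A\tmin\gC_B\subseteq(\Id_{V_A}\otimes\gamma_k^\phi)(\gC_A\tmax\gC_B^{\tmax k})$ holds automatically for every $k$; only the reverse inclusion requires proof. I would prove it by induction on the number of simplex factors, via a ``one extra factor costs one extra copy'' lemma: if $K_\phi=\Delta_{n_1}\times K'$ with $K'$ a product of $m-1$ simplices for which the hierarchy stabilises at level $m-1$, then the hierarchy for $\cone(K_\phi)$ stabilises at level $m$. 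The mechanism is de-symmetrisation. Each individual simplex factor is classical, hence broadcastable (min $=$ max for a simplex). Given a symmetric $w\in\gC_A\tmax\gC_B^{\tmax m}$ with $(\Id_{V_A}\otimes\gamma_m^\phi)(w)=z$, I would spend one of the $m$ copies to implement a measure-and-prepare channel on the first factor $\Delta_{n_1}$; because that factor is classical this is lossless, it transfers the first factor's data into an enlarged $A$-system, and it leaves a symmetric $(m-1)$-extension for $\cone(K')$. The inductive hypothesis then places the result in the minimal tensor product, and recombining exhibits $z\in\gC_A\tmin\gC_B$. The delicate point is that conditioning on one factor must preserve positivity against the product functionals in $(\gC_B^{*})^{\otimes (m-1)}$ and the residual symmetry, which is exactly where the count $k\ge m$ (a fresh copy available for each factor) is used.

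\textbf{Direction (i)$\Rightarrow$(ii).} I would argue by contraposition: assuming $K_\phi$ is \emph{not} affinely a product of $\le k$ simplices, I construct a proper cone $\gC_A$ and a tensor $z$ that is $k$-extendible but lies outside $\gC_A\tmin\gC_B$. A natural probe is $\gC_A=\gC_B^{*}$, under which $V_A\otimes V_B$ is identified with linear maps on $V_B$, with $\gC_A\tmin\gC_B$ the entanglement-breaking (separable) maps and $\gC_A\tmax\gC_B$ the $\gC_B$-positive maps; the identity map already separates the case $k=1$. For general $k$ the role of the characterisation is to convert the failure of the product-of-$\le k$-simplices structure into a concrete configuration of faces and vertices of $K_\phi$, which I would then encode as a $\gC_B$-positive map admitting an $m$-fold symmetric extension for all $m\le k$ yet failing to be entanglement-breaking. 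The subtle requirement is that the witness be genuinely $k$-extendible, not merely $\infty$-extendible, so the characterisation must track the number of factors quantitatively and match it to the extendability level.

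\textbf{Main obstacle.} The crux is this affine characterisation of products of simplices. Facet-counting and even combinatorial type are insufficient: a triangular frustum (a truncated tetrahedron) is a simple $3$-polytope with $5=\dim+2$ facets and is combinatorially the prism $\Delta_2\times\Delta_1$, yet its two triangular facets are homothetic but not translates, so it is not affinely a product of simplices. Thus the characterisation must capture the rigid affine ``grid'' structure of the vertex set rather than mere incidence data, and it must do so in a form refined enough to read off the exact number $m$ of factors. Proving this characterisation, and then turning a violation of it into a positive-but-not-separable tensor that is $k$-extendible, is where I expect the real difficulty to lie.
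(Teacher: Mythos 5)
Your roadmap matches the paper's at the top level (reduce the quantified statement over all $\gC_A$ to a single tensor statement about $\gamma_k^\phi$, use an affine characterization of products of simplices, treat $k=1$ as the simplicial base case), but both of your key steps have genuine gaps. For (ii)$\Rightarrow$(i), the inductive ``spend one copy to measure the first simplex factor'' argument does not go through as stated: the cone over $\Delta\times K'$ is not a tensor product of $\cone(\Delta)$ and $\cone(K')$, and applying a vertex functional $\psi^v$ of the classical factor $\Delta$ to one copy of a symmetric element of $\gC_A\tmax\gC_B^{\tmax m}$ does not localize the remaining copies to the face $\{v\}\times K'$, so you are not handed a valid symmetric $(m-1)$-extension for $\cone(K')$ to feed the induction; this is precisely the ``delicate point'' you flag, and it is the whole difficulty. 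The paper avoids conditioning altogether: writing $\phi=\sum_{v\in\mathcal{V}(\Delta_j)}\psi_j^v$ for each factor $j$, it exhibits the single explicit decomposition
\[ \gamma_k^\phi=\sum_{v_1\in\mathcal{V}(\Delta_1)}\cdots\sum_{v_k\in\mathcal{V}(\Delta_k)}P_{\Sym_k(V^*)}\left(\psi_1^{v_1}\otimes\cdots\otimes\psi_k^{v_k}\right)\otimes x_{v_1,\dots,v_k}, \]
one copy per simplex factor measured simultaneously, and verifies it by pairing with $z_1\otimes\cdots\otimes z_k\otimes\psi_i^v$. That explicit formula is the concrete content your sketch is missing.

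For (i)$\Rightarrow$(ii) you have essentially no argument: the witness is never constructed, the characterization of products of simplices is never stated, and the probe cone $\gC_A=\gC_B^*$ is the natural one only for $k=1$. For general $k$ the correct move is to observe that (i) for all $\gC_A$ is equivalent to $\gamma_k^\phi\in(\gC_B^*)^{\tmin k}\tmin\gC_B$ (take $\gC_A=(\gC_B^*)^{\tmin k}$ and push the identity tensor through $\Id\otimes\gamma_k^\phi$). The paper then extracts the geometry from an \emph{arbitrary} such decomposition rather than building a counterexample: the vertex-facet tensor $\omega_k=\sum_{F}x_F^{\otimes k}\otimes\psi_F$ pairs to zero with $\gamma_k^\phi$, which forces every term $\psi_{F_1}\otimes\cdots\otimes\psi_{F_k}\otimes x$ in the decomposition to satisfy $\Av(x)\subset\{F_1,\dots,F_k\}$; this admissibility yields $\aff\left(\bigcap F_j\right)=\bigcap\aff(F_j)$ for all facets, and Theorem \ref{theorem:polysimplices} (proved via Kaibel--Wolff: simple and $2$-level characterizes products of simplices) together with the bound $\card\Av(x)\leq k$ at some vertex gives the number of factors. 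Your frustum example correctly shows that combinatorial type is insufficient, but the quantitative affine characterization you identify as the crux is exactly the part you have not supplied.
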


As a byproduct of our proof, we obtain a characterization of products of simplices as the only polytopes (up to affine equivalence) for which the operations ``intersection'' and ``affine hull'' commute when applied to the face lattice (see Figure \ref{figure-th3} for an illustration). We denote by $\aff(X)$ the affine hull of a subset $X$ of a vector space, using the convention that $\aff(\emptyset)=\emptyset$.
  
\begin{theorem} \label{theorem:polysimplices}
Let $P$ be a polytope. The following are equivalent:
\begin{enumerate}
    \item[(i)] The polytope $P$ is affinely equivalent to a Cartesian product of simplices;
    \item[(ii)] For every faces $(F_i)_{i \in I}$ of $P$, we have
    \begin{equation} \label{eq:aff} 
    \aff \left( \bigcap_{i \in I} F_i \right) = \bigcap_{i \in I} \aff ( F_i). \end{equation}
\end{enumerate}
\end{theorem}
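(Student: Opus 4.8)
The plan is to prove both implications, with (ii)$\Rightarrow$(i) carrying essentially all the weight; the reverse direction is a short structural check.

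\emph{Direction (i)$\Rightarrow$(ii).} First I would note that property (ii) is invariant under affine isomorphisms, since an affine bijection maps faces to faces and commutes with both $\aff$ and finite intersections; it therefore suffices to treat $P=\Delta^{n_1}\times\cdots\times\Delta^{n_r}$ directly. For a single simplex with affinely independent vertices $\{v_j\}$, every face is $\conv\{v_j:j\in S\}$ for some index set $S$, and affine independence yields both $\conv\{v_j:j\in S\}\cap\conv\{v_j:j\in S'\}=\conv\{v_j:j\in S\cap S'\}$ and $\aff\{v_j:j\in S\}\cap\aff\{v_j:j\in S'\}=\aff\{v_j:j\in S\cap S'\}$, which is exactly (ii) (the empty-intersection case matching the convention $\aff(\emptyset)=\emptyset$, as disjoint vertex sets span disjoint flats). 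I would then use that the faces of a product are precisely the products of faces, that $\aff(F\times G)=\aff(F)\times\aff(G)$, and that intersection distributes over Cartesian products, to conclude by induction on the number of factors that any product of simplices satisfies (ii).

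\emph{Direction (ii)$\Rightarrow$(i).} I would argue by induction on $n=\dim P$, the cases $n\leq 1$ being immediate. The crucial preliminary remark is that (ii) is inherited by every face of $P$, since the faces of a face $F$ are exactly the faces of $P$ contained in $F$; hence by the induction hypothesis every proper face of $P$ is already a product of simplices. Specializing to low dimension, every $2$-face satisfies (ii) and is therefore affinely a triangle or a parallelogram (a direct check rules out all other polygons, e.g.\ in a pentagon two non-adjacent edges have disjoint closures but meeting affine hulls), and every $3$-face is one of $\Delta^3$, $\Delta^2\times\Delta^1$, or $(\Delta^1)^{\times 3}$. Calling two edges at a vertex $v$ \emph{linked} when the $2$-face they span is a triangle, this list shows that linkage is transitive on the edges at $v$: in a tetrahedron every pair is linked, in a cube none is, and in a triangular prism the only linked pair avoids the third edge. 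Since any three edges at $v$ span a face of dimension at most $3$, linkage is an equivalence relation, partitioning the edges at $v$ into \emph{blocks} $B_1,\dots,B_r$.

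Let $G_i$ be the smallest face containing $v$ together with the edges in $B_i$, and set $U_i=\aff(G_i)-v$. When $r\geq 2$, each $G_i$ is a \emph{proper} face (it omits the edges of the other blocks), so by induction it is a product of simplices; but all its edges at $v$ are pairwise linked, which forces each $G_i$ to be a single simplex. It then remains to globalize: I would show that $V=\bigoplus_i U_i$ and that $P$ is the Cartesian product of the $G_i$ along this splitting, equivalently that the facets of $P$ partition into blocks whose normals lie in the respective summands. The degenerate case $r=1$, where every $2$-face at $v$ is a triangle, must be handled separately to conclude that $P$ itself is a simplex: here each facet through $v$ is, by induction, a product of simplices with all edges at $v$ linked, hence an $(n-1)$-simplex, and one then deduces from (ii) that $v$ has exactly $n$ neighbours spanning $P$.

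The main obstacle is the globalization when $r\geq 2$: linkage and the faces $G_i$ are defined purely locally at the single vertex $v$, and upgrading this to a genuinely \emph{global} product decomposition---proving the $U_i$ are in direct sum and independent of the chosen vertex, and that $P$ is cut out blockwise rather than merely surjecting onto each factor---is where (ii) must be exploited most carefully, presumably by analyzing how the affine hulls of facets intersect and propagating the decomposition along the edge graph of $P$. The base case $r=1$ is the second delicate point, since there the product structure is trivial and simpliciality has to be extracted directly from (ii).
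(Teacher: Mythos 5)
Your direction (i)$\Rightarrow$(ii) is correct and essentially identical to the paper's argument (affine independence for a single simplex, then closure of property (ii) under Cartesian products). The problem is the converse, where your proof is not complete: you explicitly defer the two steps that carry all the difficulty. The ``globalization'' when $r\geq 2$ --- showing that the $U_i$ are in direct sum, that the decomposition is independent of the chosen vertex, and that $P$ is actually the product of the $G_i$ rather than merely containing their join --- is precisely the content of the theorem in this direction; describing it as an obstacle to be handled ``presumably by analyzing how the affine hulls of facets intersect'' is not a proof. The same goes for the $r=1$ case, where you assert that one ``then deduces from (ii) that $v$ has exactly $n$ neighbours spanning $P$'' without saying how. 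In addition, the step establishing transitivity of linkage is flawed as stated: three edges at a vertex $v$ need \emph{not} span a face of dimension at most $3$. The minimal face of $P$ containing three edges at $v$ can have arbitrarily large dimension (for instance, at a vertex of the $4$-dimensional cross-polytope, the edges towards $e_2$, $-e_2$, $e_3$ are contained in no proper face), so you cannot invoke your classification of $3$-faces to compare linkage of the three pairs. You would first need to know that $P$ is simple, which is exactly what you are trying to prove; and even granting transitivity, the classification of proper $3$-faces only applies when $n\geq 4$, so $n=3$ needs separate treatment.

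For comparison, the paper avoids this local-to-global reconstruction entirely by reducing to the Kaibel--Wolff characterization: a polytope is affinely a product of simplices if and only if it is simple and $2$-level. Both properties fall out of (ii) quickly. Two-levelness is nearly immediate: if a facet $F$ and a face $G$ are disjoint, then \eqref{eq:aff} together with the convention $\aff(\emptyset)=\emptyset$ gives $\aff(F)\cap\aff(G)=\emptyset$, so $G$ lies in a translate of $\aff(F)$. Simplicity follows by noting that any $n-1$ facets through a vertex $x$ have affine hulls meeting in a flat of dimension $\geq 1$, hence by \eqref{eq:aff} a common face of dimension $\geq 1$; this makes the vertex figure $P/x$ dual-$(n-1)$-neighbourly and therefore a simplex. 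If you want to keep your inductive strategy, you would in effect have to reprove the hard half of Kaibel--Wolff along the way; I would recommend either citing that theorem as the paper does, or at minimum supplying full arguments for the direct-sum decomposition, its vertex-independence, the blockwise description of the facets, and the $r=1$ case.
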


In the case of dimension $3$, Theorem \ref{theorem:polysimplices} means that the only polyhedra without stellation are the simplex, the cube and the triangular prism, extending an observation by Wenninger \cite[p.35]{Wenninger74}.

\begin{figure}[htbp] \begin{center}
\begin{tikzpicture}[scale=1]
	\coordinate (a) at (0,0);
	\coordinate (b) at (0,1);
	\coordinate (c) at (2,2);
	\coordinate (d) at (2,0);
	\coordinate (x) at (-2,0);
	\draw[fill=gray!30] (a)--(b)--(c)--(d)--(a);
    \draw[dashed] (b)--(x)--(a);
	\end{tikzpicture}
\end{center} 
\caption{The planar case of Theorem \ref{theorem:polysimplices}: if a polygon is neither a triangle nor a parallelogram, it has disjoint edges whose affine hulls intersect.}
\label{figure-th3}
\end{figure}

\subsection*{Conventions and preliminaries} Given an integer $n \geq 1$, we denote by $[n]$ the set $\{1,\dots,n\}$. All the vector spaces we consider are implicitly assumed to be finite-dimensional. If $V$ is a vector space, the action of $f \in V^*$ on $x \in V$ is denoted as $f(x)$, $\langle f,x \rangle$ or $\langle x,f \rangle$. We always identify the double dual~$V^{**}$ with $V$ itself. Similarly, for an integer $k \geq 2$ we always identify $(V^{\otimes k})^*$ with $(V^*)^{\otimes k}$.

Given vector spaces $V_1,V_2$, the adjoint of a linear map $\Phi:V_1 \to V_2$ is denoted as $\Phi^* : V_2^* \to V_1^*$. If $X \subset V$, the affine hull $\aff(X)$ is the set of elements of the form $\lambda_1 x_1 + \dots + \lambda_n x_n$ for an integer $n \geq 1$, elements $x_1,\dots,x_n$ in $X$ and real numbers $\lambda_1,\dots,\lambda_n$ such that $\lambda_1 + \dots +\lambda_n =1$. An element $y \in V$ belongs to $\aff(X)$ if and only if every affine function $f:V \to \R$ which vanishes on $X$ vanishes at $y$.

\subsection*{Structure of the paper} After reviewing the basic properties of the symmetric subspace, we discuss the basic properties of the generalized extendibility hierarchy (Section \ref{sec:BasicsHierarchy}), a useful dual formulation (Section \ref{sec:dual}) and some examples (Section~\ref{sec:Examp}) including the case of quantum theory. In Section \ref{sec:proof-convergence} we prove Theorem~\ref{thm:Main1} by making use of a generalization of the de Finetti theorem due to Barrett and Leifer \cite{BarrettLeifer}. The proof of Theorem \ref{thm:Main2} is presented in Section \ref{sec:HierarchyFinite}. It relies on Theorem~\ref{theorem:polysimplices} which we prove in Section \ref{sec:PolySimpl}.

\section{Basic properties of the extendibility hierarchy and examples} \label{sec:results}
Let $V$ be a finite-dimensional real vector space and $k \geq 1$ an integer. The symmetric group $\mathfrak{S}_k$ acts naturally on $V^{\otimes k}$: if $\sigma \in \mathfrak{S}_k$, we define $U_\sigma : V^{\otimes k} \to V^{\otimes k}$ as the linear map such that
\[ U_\sigma (x_1 \otimes \dots \otimes x_k) = x_{\sigma^{-1}(1)} \otimes \dots \otimes x_{\sigma^{-1}(k)}\]
for every $x_1,\dots,x_k \in V$.
We define the \emph{symmetric subspace} $\Sym_k(V)$ as the subspace of $V^{\otimes k}$ which is invariant for this action
\[ \Sym_k(V) = \{ x \in V^{\otimes k} \st U_\sigma x = x \textnormal{ for every } \sigma \in \mathfrak{S}_k \}. \]
The \emph{symmetric projection} is the operator $P_{\Sym_k(V)} : V^{\otimes k} \to V^{\otimes k}$ defined as
\[ P_{\Sym_k(V)} = \frac{1}{k!} \sum_{\sigma \in \mathfrak{S}_k} U_\sigma .\]
It is a projection whose range equals $\Sym_k(V)$. If we identify $(V^{\otimes k})^*$ with $(V^*)^{\otimes k}$, we have the relation
\[ P_{\Sym_k(V)}^* = P_{\Sym_k(V^*)} .\]
Moreover, if $\gC \subset V$ is a proper cone, then
\[ P_{\Sym_k(V)}( \gC ^{\tmin k}) = \gC ^{\tmin k} \cap {\Sym_k(V)}, \]
\[ P_{\Sym_k(V)}( \gC ^{\tmax k}) = \gC ^{\tmax k} \cap {\Sym_k(V)}. \]

\subsection{The extendibility hierarchy}\label{sec:BasicsHierarchy}

Consider proper cones $\gC_A \subset V_A$, $\gC_B \subset V_B$ and $\phi \in \inter(\gC_B^*)$.
For an integer $k \geq 1$, denote by
\begin{equation} \label{def-Extk} \Ext_k(\gC_A,\gC_B,\phi) = (\Id_{V_A} \otimes \gamma_k^\phi) \left(\gC_A \tmax \gC_B^{\tmax k} \right) \end{equation}
the cone appearing in Theorem \ref{thm:Main1}. Since $\gamma_k^\phi = (\Id_{V_B} \otimes \phi^{\otimes(k-1)}) \circ P_{\Sym_k(V_B)}$, an element $x \in V_A \otimes V_B$ belongs to $\Ext_k(\gC_A,\gC_B,\phi)$ if and only if 
\[ \exists y_k \in \left( \Id_{V_A} \otimes P_{\Sym_k(V_B)} \right) \left( \gC_A \tmax \gC_B^{\tmax k} \right) \st
 x =  \left( \Id_{V_A} \otimes \Id_{V_B} \otimes \phi^{\otimes(k-1)} \right)(y_k). \] 
Such an element $y_k$ is called a \emph{$k$-extension} of $x$ and $\Ext_k(\gC_A,\gC_B,\phi)$ is called the cone of \emph{$k$-extendible elements}. Observe that
\[ \left( \Id_{V_A} \otimes P_{\Sym_k(V_B)} \right) \left( \gC_A \tmax \gC_B^{\tmax k} \right) 
= \left( V_A \otimes \Sym_k(V_B) \right) \cap \left( \gC_A \tmax \gC_B^{\tmax k} \right)
\]
and therefore, whenever $y_k$ is a $k$-extension of $x$, then $(\Id_{V_A} \otimes \Id_{V_B}^{\otimes (k-1)} \otimes \phi ) (y_k)$ is a $(k-1)$-extension of $x$. It follows that the sequence $\Ext_k(\gC_A,\gC_B,\phi)$ is a decreasing sequence of cones.
We also define 
\[
\Ext_\infty(\gC_A,\gC_B,\phi) = \bigcap_{k \geq 1}\Ext_k(\gC_A,\gC_B,\phi).
\]
The first level of this hierarchy is the maximal tensor product itself, i.e., we have $\Ext_1(\gC_A,\gC_B,\phi)=\gC_A \tmax \gC_B$, and Theorem \ref{thm:Main1} asserts that 
$\Ext_{\iy}(\gC_A,\gC_B,\phi)=\gC_A \tmin \gC_B$. 

\subsection{The dual of the extendibility hierarchy}\label{sec:dual}

The adjoint of the $k$th reduction map $\gamma_k^\phi$ is the map $(\gamma_k^\phi)^* : V_B^* \to (V_B^*)^{\otimes k}$ which for $\psi \in V_B^*$ reads as
\[ (\gamma_k^\phi)^*(\psi) = P_{\Sym_k(V_B^*)} ( \psi \otimes \phi^{\otimes(k-1)}). \]
Let $\zeta \in V_A^* \otimes V_B^*$. It follows immediately from \eqref{def-Extk} that
\begin{equation} \label{eq:Extk-dual}
\zeta \in \Ext_k(\gC_A,\gC_B,\phi)^* \iff (\Id_{V_A} \otimes (\gamma_k^\phi)^*)(\zeta) \in \gC_A^* \tmin (\gC_B^*)^{\tmin k}.
\end{equation}

In particular, we can write down a statement which is dual to Theorem \ref{thm:Main1}.

\begin{theorem} \label{theorem:dual}
Let $\gC_A \subset V_A$ and $\gC_B \subset V_B$ be proper cones. For every $x \in \inter (\gC_A \tmax \gC_B)$ and $y \in \inter (\gC_B)$, there is an integer $k \geq 1$ such that 
\[ \left( \Id_{V_A} \otimes P_{\Sym_k(V_B)} \right) (x \otimes y^{\otimes (k-1)}) \in \gC_A \tmin \gC_B^{\tmin k} .\]
\end{theorem}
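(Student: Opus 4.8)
The plan is to deduce Theorem~\ref{theorem:dual} by dualizing Theorem~\ref{thm:Main1} applied to the dual cones $\gC_A^*$ and $\gC_B^*$. First I would observe that $y \in \inter(\gC_B) = \inter\big((\gC_B^*)^*\big)$ is exactly the type of functional that Theorem~\ref{thm:Main1} requires as its ``$\phi$'', now relative to the cone $\gC_B^*$. Applying Theorem~\ref{thm:Main1} to $\gC_A^* \subset V_A^*$ and $\gC_B^* \subset V_B^*$ with the form $y$ gives
\[ \gC_A^* \tmin \gC_B^* = \bigcap_{k \geq 1} \Ext_k(\gC_A^*,\gC_B^*,y). \]

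Next I would dualize this identity. The cones $\Ext_k(\gC_A^*,\gC_B^*,y)$ are decreasing in $k$ (as recorded for the hierarchy in general in Section~\ref{sec:BasicsHierarchy}), so their duals increase and $U := \bigcup_{k \geq 1} \Ext_k(\gC_A^*,\gC_B^*,y)^*$ is a convex cone. Taking duals of both sides, using the bipolar theorem in the form $(\bigcap_k C_k)^* = \overline{\bigcup_k C_k^*}$ for a decreasing sequence of closed convex cones, together with $(\gC_A^*)^* = \gC_A$ and $(\gC_B^*)^* = \gC_B$, yields
\[ \gC_A \tmax \gC_B = (\gC_A^* \tmin \gC_B^*)^* = \overline{U}. \]

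To identify $U$ concretely, I would invoke the dual characterization \eqref{eq:Extk-dual} with $(\gC_A,\gC_B,\phi)$ replaced by $(\gC_A^*,\gC_B^*,y)$, combined with the adjoint formula of Section~\ref{sec:dual}, which in this setting reads $(\gamma_k^y)^*(v) = P_{\Sym_k(V_B)}\big(v \otimes y^{\otimes(k-1)}\big)$ for $v \in V_B$. This gives, for every $x \in V_A \otimes V_B$,
\[ x \in \Ext_k(\gC_A^*,\gC_B^*,y)^* \iff \big(\Id_{V_A} \otimes P_{\Sym_k(V_B)}\big)\big(x \otimes y^{\otimes(k-1)}\big) \in \gC_A \tmin \gC_B^{\tmin k}. \]
Hence the conclusion of Theorem~\ref{theorem:dual} is precisely the statement that $x \in U$.

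It then remains to promote membership in $\overline{U}$ to membership in $U$ for interior points. Since $\gC_A \tmax \gC_B$ is a proper cone it is full-dimensional; as $\overline{U} = \gC_A \tmax \gC_B$, the affine hull of $U$ is the whole space, so the relative interior of $U$ coincides with its interior. By the classical fact $\inter(\overline{U}) = \inter(U)$ for a convex set $U$, we obtain $\inter(\gC_A \tmax \gC_B) = \inter(U) \subset U$. Thus any $x \in \inter(\gC_A \tmax \gC_B)$ lies in $U$, i.e., in $\Ext_k(\gC_A^*,\gC_B^*,y)^*$ for some finite $k$, which is exactly the desired conclusion. I expect the only genuine subtlety to be this last passage from the closure to the union (and the accompanying check that $U$ is full-dimensional so that interior equals relative interior); every other step is formal manipulation of adjoints, duals, and the nesting of the hierarchy already recorded in the excerpt.
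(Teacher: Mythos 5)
Your proposal is correct and follows essentially the same route as the paper's own proof: apply Theorem~\ref{thm:Main1} to $\gC_A^*,\gC_B^*$ with $\phi=y$, dualize via the bipolar theorem to get $\gC_A \tmax \gC_B = \overline{\bigcup_k \Ext_k(\gC_A^*,\gC_B^*,y)^*}$, and use \eqref{eq:Extk-dual} to identify the members of the union. The only difference is that you spell out the closure-versus-union step (via $\inter(\overline{U})=\inter(U)$ for a full-dimensional convex cone $U$), which the paper leaves implicit; this is a welcome clarification rather than a deviation.
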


\begin{proof} 
By Theorem \ref{thm:Main1} applied to $\gC_A^*$ and $\gC_B^*$, we have
\[ \gC_A^* \tmin \gC_B^* = \Ext_{\infty} ( \gC_A^*,\gC_B^*,y) = \bigcap_{k \in \N}
\Ext_k ( \gC_A^*,\gC_B^*,y). \]
Using the bipolar theorem in the form $(\gC^*)^{*}=\overline{\gC}$ whenever $\gC$ is a not-necessarily-closed convex cone, this is equivalent to the dual equation
\[ \gC_A \tmax \gC_B = \left( \bigcap_{k \in \N}
\Ext_k ( \gC_A^*,\gC_B^*,y) \right)^* = \overline{\bigcup_{k \in \N} \Ext_k ( \gC_A^*,\gC_B^*,y)^*}. \]
Since $x \in \inter (\gC_A \tmax \gC_B)$, there is an integer $k \geq 1$ such that $x \in \Ext_k ( \gC_A^*,\gC_B^*,y)^*$. By \eqref{eq:Extk-dual}, this means that $(\Id_{V_A} \otimes (\gamma_k^y)^*)(x) \in \gC_A \tmin \gC_B^{\tmin k}$, which is the desired conclusion.
\end{proof}

\subsection{Examples}\label{sec:Examp}

\subsubsection*{Simplicial cones}

We say that a cone $\gC \subset V$ is \emph{simplicial} (or \emph{classical}) if any of its bases is a simplex. Since $\gamma_1^\phi$ is the identity operator for any nonzero linear form $\phi$, the case $k=1$ of Theorem \ref{thm:Main2} says that a cone $\gC_B$ is simplicial if and only if $\gC_A \tmin \gC_B = \gC_A \tmax \gC_B$ for every proper cone $\gC_A$. This result can be traced back to \cite{NamiokaPhelps69} and is much simpler to prove than the case $k>1$ of Theorem~\ref{thm:Main2}. A strengthening of the $k=1$ case was recently obtained \cite{ALPP21}: given two proper cones $\gC_A$, $\gC_B$, we have $\gC_A \tmin \gC_B = \gC_A \tmax \gC_B$ if and only if either $\gC_A$ or $\gC_B$ is simplicial.

\subsubsection*{Cone over a square}

Consider the following vectors in $\R^3$
\begin{equation} \label{eq:square} x_{+0} = (1,1,0),\  x_{-0} = (1,-1,0),\  x_{0+} = (1,0,1),\ x_{0-} = (1,0,-1)\end{equation}
and let $\gC_B$ be the cone they generate (see Figure \ref{ee}). 

\begin{figure}[htbp] \begin{center}
\begin{tikzpicture}[scale=1]
\coordinate (O) at (0,0);
\coordinate (a) at (-2,2.5);
\coordinate (b) at (-0.4,2.1);
\coordinate (c) at (2,2.5);
\coordinate (d) at (0.4,2.9);
\coordinate (e) at (-1.2,2.3);
\coordinate (f) at (0.8,2.3);
\coordinate (g) at (1.2,2.7);
\coordinate (h) at (-0.8,2.7);

\draw (O) node {$\bullet$} ;
\draw (O) node [below] {$0$} ;

\draw (a) node [above] {$x_{-0}$} ;
\draw (c) node [above] {$x_{+0}$} ;
\draw (b) node [above] {$x_{0-}$} ;
\draw (d) node [above] {$x_{0+}$} ;

\draw (a) -- (b) -- (c) -- (d) -- (a) ; 
\draw (a) -- (O) -- (b) ;
\draw (c) -- (O) ;
\draw[gray!50, dashed] (d) -- (O) ;

\end{tikzpicture}
\end{center}
\caption{The cone $\gC_B$ over a square}
\label{ee}
\end{figure}

The dual cone $\gC^*_B$ is generated by the following linear forms, identified with vectors in $\R^3$
\[ \psi_{++} = \frac{1}{2}(1,1,1),\  \psi_{+-} = \frac{1}{2}(1,1,-1),\  \psi_{-+} = \frac{1}{2}(1,-1,1),\ \psi_{--} = \frac{1}{2}(1,-1,-1).\]

Consider the linear form $\phi=(1,0,0)$. The corresponding base $K_\phi$ of $\gC_B$ is the square whose vertices are the vectors \eqref{eq:square}.
The second reduction map satisfies  $\gamma^\phi_2(u \otimes v) = \frac{u+v}{2}$ for every $u,v \in K_\phi$. One can check that it can be rewritten, for $y \in \R^3 \otimes \R^3$, as
\begin{equation*} 
\begin{split}
2\gamma^\phi_2(y) = \ 
  &  (\psi_{++}\otimes\psi_{+-} + \psi_{+-}\otimes\psi_{++})(y) \cdot x_{+0} \\
  & +  (\psi_{-+}\otimes\psi_{--} + \psi_{--}\otimes\psi_{-+})(y) \cdot x_{-0}\\
  & +  (\psi_{++}\otimes\psi_{-+} + \psi_{-+}\otimes\psi_{++})(y) \cdot x_{0+}\\
  & +  (\psi_{+-}\otimes\psi_{--} + \psi_{--}\otimes\psi_{+-})(y) \cdot x_{0-}.
\end{split}
\end{equation*}
Let $\gC_A \subset V_A$ be a arbitrary cone. For any $z \in \gC_A \tmax \gC_B \tmax \gC_B$, this formula shows that $(\Id_A \otimes \gamma_2^\phi)(z)$ belongs to $\gC_A \tmin \gC_B$, since it produces a decomposition into a sum of $4$ terms (with nonnegative weights) of the form $x_A \otimes x_B$ with $x_A \in \gC_A$ and $x_B$ one of the vectors in \eqref{eq:square}.

We also observe that the choice of the linear form $\phi$ is crucial. If we choose $\phi' \in \inter(\gC_B^*)$ which is not proportional to $\phi$, then the corresponding base $K_{\phi'}$ is a quadrilateral which is not a parallelogram. Theorem \ref{thm:Main2} then implies that the extendibility hierarchy for $\phi'$ does not stop after finitely many steps.

\subsubsection*{The quantum case}
Consider an operator $\rho_{AB} \in (\M_{d_1} \otimes \M_{d_2})^+$ and an integer $k \geq 1$. We say that $\rho_{AB}$ is $k$-max-extendible if there exists an operator $\sigma_{AB_1\cdots B_k} \in \M_{d_1}^+ \tmax \left(\M_{d_2}^+\right)^{\tmax k}$ such that \[
\rho_{AB} = \lb\Id_{\displaystyle \M^{\mathrm{sa}}_{d_1}}\otimes \gamma^{\Tr}_k\rb\lb \sigma_{AB_1\cdots B_k}\rb .
\]
If moreover the operator $\sigma_{AB_1\cdots B_k}$ can be chosen to be positive semidefinite, we say that $\rho_{AB}$ is $k$-PSD-extendible. An operator which is $k$-PSD-extendible is also $k$-max-extendible, and the following proposition shows that the converse is false.

\begin{proposition}[Proved in Appendix \ref{app:quantum}] \label{prop:quantum}
There is an operator $\rho_{AB} \in (\M_3 \otimes \M_3)^+$ which is $2$-max-extendible but not $2$-PSD-extendible. 
\end{proposition}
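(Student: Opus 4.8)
The plan is to translate both extendibility notions into statements about their dual cones, and then to separate them by an explicit witness. Write $E_2:=\Ext_2(\M_3^+,\M_3^+,\Tr)$ for the cone of $2$-max-extendible operators and $Q_2$ for the cone of $2$-PSD-extendible operators, so that $Q_2\subseteq E_2\cap(\M_3\otimes\M_3)^+$; the proposition amounts to making this inclusion strict. For $\zeta\in\M_3^{\sa}\otimes\M_3^{\sa}$, regarded as acting on $A\otimes B$, denote by $\zeta_{AB_i}$ the copy of $\zeta$ acting on $A$ and the $i$-th $B$-factor, and set
\[ M(\zeta)=\tfrac12\left(\zeta_{AB_1}\otimes\Id_{B_2}+\zeta_{AB_2}\otimes\Id_{B_1}\right), \]
an operator on $A\otimes B_1\otimes B_2$ invariant under exchanging $B_1$ and $B_2$. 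Since $(\gamma_2^{\Tr})^*(\psi)=P_{\Sym_2(V_B)}(\psi\otimes\Id)$, we have $M(\zeta)=(\Id_{V_A}\otimes(\gamma_2^{\Tr})^*)(\zeta)$, so that \eqref{eq:Extk-dual} gives $\zeta\in E_2^*$ precisely when $M(\zeta)\in\M_3^+\tmin(\M_3^+)^{\tmin 2}$ (full separability across $A\,|\,B_1\,|\,B_2$). A direct dualization of the definition of $Q_2$ shows, on the other hand, that $\zeta\in Q_2^*$ precisely when $M(\zeta)$ is positive semidefinite on the symmetric subspace $V_A\otimes\Sym_2(V_B)$.

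Because full separability implies positivity, we get $E_2^*\subseteq Q_2^*$, and the gap is controlled by operators $M(\zeta)$ that are positive but not fully separable. The extra requirement $\rho_{AB}\in(\M_3\otimes\M_3)^+$ means, however, that I must separate $Q_2$ from $E_2\cap(\M_3\otimes\M_3)^+$, whose dual is $\overline{E_2^*+(\M_3\otimes\M_3)^+}$. I therefore need a witness $\zeta_0$ with $M(\zeta_0)$ positive on the symmetric subspace yet $\zeta_0\notin\overline{E_2^*+(\M_3\otimes\M_3)^+}$; concretely, $\zeta_0$ must not decompose as $\zeta_1+\zeta_2$ with $M(\zeta_1)$ fully separable and $\zeta_2$ positive semidefinite. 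This is an indecomposability condition of the same flavour as the one defining indecomposable entanglement witnesses, and it is the reason the statement lives in dimension $3$: such phenomena (equivalently, PPT-entangled states) first occur in $\CC^3\otimes\CC^3$.

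In the write-up I would realize this concretely rather than through abstract duality. I would exhibit a symmetric operator $\sigma_0$ on $A\otimes B_1\otimes B_2$ that is block-positive, i.e. $\sigma_0\in\M_3^+\tmax(\M_3^+)^{\tmax 2}$, but \emph{not} positive semidefinite, chosen so that its reduction $\rho_{AB}=(\Id_{V_A}\otimes\gamma_2^{\Tr})(\sigma_0)$ is positive semidefinite. Then $\rho_{AB}\in(\M_3\otimes\M_3)^+$ is $2$-max-extendible by construction, with extension $\sigma_0$. To certify that $\rho_{AB}$ is not $2$-PSD-extendible I would produce $\zeta_0$ with $M(\zeta_0)$ positive on the symmetric subspace and
\[ \Tr(\zeta_0\,\rho_{AB})=\Tr\!\big(M(\zeta_0)\,\sigma_0\big)<0, \]
the second equality being the adjoint relation above together with the symmetry of $\sigma_0$; this exhibits $\zeta_0\in Q_2^*$ separating $\rho_{AB}$ from $Q_2$. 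Such a negative pairing is possible exactly because $\sigma_0$ is block-positive but not positive semidefinite, while $M(\zeta_0)$ is a positive (and necessarily entangled) operator of the rigid symmetrized form. I would source both $\sigma_0$ and $\zeta_0$ from a concrete indecomposable positive map on $\M_3$, such as the Choi map, and verify the few explicit (in)equalities by direct computation.

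The main obstacle is to meet all constraints simultaneously on a single object: $\sigma_0$ must be block-positive, fail to be positive semidefinite, and still have a positive semidefinite reduction, while the positive witness $M(\zeta_0)$ must have the rigid symmetrized shape above and yet pair negatively with $\sigma_0$. The essential tension is the indecomposability of the previous paragraph: in lower dimensions every block-positive operator would decompose and force $E_2\cap(\M_3\otimes\M_3)^+=Q_2$, collapsing the gap, so the construction genuinely relies on the $3\times 3$ setting, where the Choi map (or any indecomposable witness together with an associated PPT-entangled state) supplies the required $\sigma_0$ and $\zeta_0$.
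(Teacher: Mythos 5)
Your dual-cone setup is sound and your strategy is in fact the one the paper follows: exhibit a symmetric extension $\sigma_0 \in \M_3^+\tmax(\M_3^+)^{\tmax 2}$ that is block-positive but not PSD, whose reduction $\rho_{AB}$ is PSD, and then rule out PSD-extendibility with a witness of the symmetrized form $M(\zeta_0)=\tfrac12(\zeta_{AB_1}\otimes\mathbf 1+\zeta_{AB_2}\otimes\mathbf 1)$ that lies in $Q_2^*$ but pairs non-positively with $\rho_{AB}$. The identity $\Tr(\zeta_0\rho_{AB})=\Tr(M(\zeta_0)\sigma_0)$ and the observation that a negative value forces $M(\zeta_0)$ to be entangled are both correct, and the remark that this is a genuinely $3\times 3$ phenomenon is apt.

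The genuine gap is that the proposal stops exactly where the proof has to begin: the proposition is a pure existence statement, and you never produce $\sigma_0$ or $\zeta_0$. ``I would source both from the Choi map and verify the (in)equalities by direct computation'' is a research plan, not a proof --- it is not verified that the Choi map actually yields a block-positive symmetric $\sigma_0$ with PSD reduction, nor that a compatible witness exists for that specific $\rho_{AB}$; these constraints must all be met simultaneously on one object, which is precisely the hard part you flag yourself. The paper does this work explicitly with the family $X_{\alpha,\beta,\gamma}$: it takes $Y=X_{1,\eta,1}$ with $\eta=1-\sqrt2/2$, writes it as a positive combination of $X_{4,1,2\sqrt2+1}$ (which is $2$-PSD-extendible, with an explicit rank-$3$ extension) and $X_{0,1,1}$ (whose partial transpose is $2$-PSD-extendible, giving a block-positive extension), and then uses the witness $W=X_{1,\eta,-2\eta}$: the operator $W_2=M(W)$ is positive definite while $\Tr(YW)=0$, so any PSD $2$-extension $Y_2$ would satisfy $\Tr(Y_2W_2)=0$ and hence vanish --- a slightly different closing move from your strict inequality $\Tr(\zeta_0\rho_{AB})<0$ (equality to zero against a \emph{strictly} positive definite witness suffices), but the same mechanism. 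To complete your argument you would need to supply concrete matrices playing the roles of $\sigma_0$ and $\zeta_0$ and carry out the verifications; until then the existence claim is unproven.
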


The usual monogamy theorem~\cite{DPS04, Yang06} states that an operator which is $k$-PSD-extendible for every integer $k \geq 1$ is separable; the stronger version given by Theorem \ref{thm:Main1} is that an operator which is $k$-max-extendible for every $k \geq 1$ is separable. 

Finally, let us mention the following consequence of Theorem \ref{theorem:dual}. To our knowledge, this result does not appear in the quantum information literature.

\begin{corollary}
Let $\rho_{AB}\in (\M_{d_1} \otimes \M_{d_2})^+$ be of full rank. Then there exists an integer $k \geq 1$ such that the operator on $\M_{d_1} \otimes \M_{d_2}^{\otimes k}$ defined as
\[ \frac{1}{k} \sum_{i=1}^k \rho_{AB_i} = \left( \Id_{\displaystyle \M_{d_1}^{\mathrm{sa}}} \otimes P_{\Sym_k(\M_{d_2}^{\mathrm{sa}})} \right) \left(\rho_{AB} \otimes \mathbf{1}^{\otimes (k-1)}_{d_2} \right)\]
is fully separable, i.e., belongs to $\M_{d_1}^+ \tmin \left( \M_{d_2}^+\right)^{\tmin k}$. 
\end{corollary}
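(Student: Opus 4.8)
The plan is to deduce the corollary directly from Theorem \ref{theorem:dual}, applied to the cones $\gC_A = \M_{d_1}^+ \subset \M_{d_1}^{\sa}$ and $\gC_B = \M_{d_2}^+ \subset \M_{d_2}^{\sa}$, with the choices $x = \rho_{AB}$ and $y = \mathbf{1}_{d_2}$. First I would verify the two interiority hypotheses. That $\mathbf{1}_{d_2} \in \inter(\M_{d_2}^+)$ is immediate, since the identity matrix is positive definite. For the condition $\rho_{AB} \in \inter(\M_{d_1}^+ \tmax \M_{d_2}^+)$, this is exactly where the full-rank assumption enters. I would use the inclusion $(\M_{d_1} \otimes \M_{d_2})^+ = \M_{d_1 d_2}^+ \subset \M_{d_1}^+ \tmax \M_{d_2}^+$ together with the fact that these two cones share the common ambient space $\M_{d_1}^{\sa} \otimes \M_{d_2}^{\sa} = \M_{d_1 d_2}^{\sa}$ (the real span of tensor products of Hermitian matrices is all of the Hermitian matrices, by a dimension count). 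Since $\inter(\M_{d_1 d_2}^+)$ is an open set contained in $\M_{d_1 d_2}^+ \subset \M_{d_1}^+ \tmax \M_{d_2}^+$, it is contained in the interior of the larger cone. A full-rank $\rho_{AB}$ lies in $\inter(\M_{d_1 d_2}^+)$ (the positive-definite matrices), hence in $\inter(\M_{d_1}^+ \tmax \M_{d_2}^+)$, as required.

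With both hypotheses in hand, Theorem \ref{theorem:dual} produces an integer $k \geq 1$ for which
\[ \left( \Id_{\M_{d_1}^{\sa}} \otimes P_{\Sym_k(\M_{d_2}^{\sa})} \right)\left(\rho_{AB} \otimes \mathbf{1}_{d_2}^{\otimes(k-1)}\right) \in \M_{d_1}^+ \tmin (\M_{d_2}^+)^{\tmin k}, \]
which is precisely the asserted membership. The only remaining point is to justify the displayed identity between this symmetrized operator and the average $\frac{1}{k}\sum_{i=1}^k \rho_{AB_i}$. This I expect to be a routine computation: writing $P_{\Sym_k(\M_{d_2}^{\sa})} = \frac{1}{k!}\sum_{\sigma \in \mathfrak{S}_k} U_\sigma$ and recalling that $U_\sigma$ permutes the tensor factors labelled $B_1,\dots,B_k$, the operator $\rho_{AB} \otimes \mathbf{1}_{d_2}^{\otimes(k-1)}$ (carrying $\rho$ on $A B_1$ and identities on the remaining factors) is sent by $\Id \otimes U_\sigma$ to the operator carrying $\rho$ on $A B_{\sigma(1)}$. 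Exactly $(k-1)!$ permutations send factor $1$ to a given factor $i$, so the average over $\mathfrak{S}_k$ collapses to $\frac{1}{k}\sum_{i=1}^k \rho_{AB_i}$.

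I do not anticipate any genuine obstacle, as the statement is a straightforward specialization of the dual theorem to the cones of positive semidefinite matrices. The one step requiring a moment's care is the passage from full rank to interiority in the \emph{maximal} tensor product: one must remember that $\rho_{AB}$ is a priori only interior to the PSD cone $\M_{d_1 d_2}^+$, which is strictly contained in $\M_{d_1}^+ \tmax \M_{d_2}^+$, and then invoke the dimension-matching inclusion to upgrade interiority in the smaller cone to interiority in the larger one.
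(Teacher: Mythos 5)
Your proposal is correct and follows exactly the route the paper intends: the corollary is stated as a direct consequence of Theorem \ref{theorem:dual} (the paper omits the details), and your verification of the two interiority hypotheses — full rank giving $\rho_{AB}\in\inter(\M_{d_1d_2}^+)\subset\inter(\M_{d_1}^+\tmax\M_{d_2}^+)$ via the common ambient space, and $\mathbf{1}_{d_2}\in\inter(\M_{d_2}^+)$ — together with the $(k-1)!$-counting for the symmetrization identity, fills in precisely the omitted steps.
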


\section{Proof of Theorem \ref{thm:Main1}} \label{sec:proof-convergence}

One direction is easy: if $x \in \gC_A \tmin \gC_B$, we may write $x = \sum_{i} a_i \otimes b_i$ with $a_i \in \gC_A$ and $b_i \in \gC_B$. We may assume by rescaling $a_i$ that 
$\phi(b_i)=1$. The formula $y_k = \sum_i a_i \otimes b_i^{\otimes k}$ shows that $x$ is $k$-extendible for every integer $k$.

The other direction of the proof relies on a specific choice of extensions. We say that $(y_k)_{k \geq 0}$ is a \emph{compatible sequence} (for $(\gC_A,\gC_B,\phi)$) if the following conditions hold:
\begin{enumerate}
\item[(i)] We have $y_0 \in \gC_A \setminus \{0\}$;
\item[(ii)] For every $k \geq 1$, we have $y_k \in \Big( V_A \otimes \Sym_k(V_B) \Big) \cap \lb \gC_A \tmax \gC_B^{\tmax k} \rb$;
\item[(iii)] For every $k \geq 1$, we have 
\[y_{k-1} = \left( \Id_{V_A} \otimes \Id_{V_B}^{\otimes (k-1)} \otimes \phi \right)(y_k). \]
\end{enumerate}
Iterating the last property shows that whenever $(y_k)_{k \geq 0}$ is a compatible sequence, then $y_k$ is a $k$-extension of $y_1$ for every $k \geq 1$.

Our proof strategy is to first show in Lemma~\ref{lemma:compatible-extension} that for every $x \in \Ext_{\iy}(\gC_A,\gC_B,\phi)$ the extensions can be chosen such that they form a compatible sequence. Then we use a representation of compatible sequences given by Proposition~\ref{proposition:main-compatible-extensions} to finish the proof.

\begin{lemma} \label{lemma:compatible-extension}
Let $x \in \Ext_{\iy}(\gC_A,\gC_B,\phi)$. Then there exists a compatible sequence $(y_k)_{k \geq 0}$ such that $y_1=x$.
\end{lemma}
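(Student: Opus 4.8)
The plan is to realize the desired compatible sequence as a limit of finite ``towers'' of extensions, each tower coming from a single high-level extension of $x$, with the existence of the limits guaranteed by a compactness argument. Throughout I assume $x \neq 0$: this is harmless, since condition~(i) forces $y_0 \neq 0$ (so $x = 0$ cannot occur), and in the application to Theorem~\ref{thm:Main1} the element $0$ lies in $\gC_A \tmin \gC_B$ trivially.

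The central step is a compactness estimate. Fix once and for all $\psi_A \in \inter(\gC_A^*)$. I claim that for every $k \geq 1$ the functional $\psi_A \otimes \phi^{\otimes k}$ lies in the \emph{interior} of the dual cone $\left( \gC_A \tmax \gC_B^{\tmax k} \right)^* = \gC_A^* \tmin (\gC_B^*)^{\tmin k}$. This rests on the general fact that a tensor product of interior points is interior in the minimal tensor product: if $p_1 \in \inter(\gC_1)$ and $p_2 \in \inter(\gC_2)$, then $\langle w, p_1 \otimes p_2 \rangle > 0$ for every nonzero $w \in \gC_1^* \tmax \gC_2^*$. Indeed, reading $w$ as a bilinear form that is nonnegative on $\gC_1 \times \gC_2$, the partial pairing $u \mapsto \langle w, u \otimes p_2 \rangle$ lies in $\gC_1^*$; if it vanished at the interior point $p_1$ it would be identically $0$, and the symmetric argument in the second slot (using $p_2 \in \inter(\gC_2)$) would force $w = 0$. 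Iterating gives the claim, and in particular $c := (\psi_A \otimes \phi)(x) > 0$. Consequently each slice
\[ B_k := \left\{ y \in \gC_A \tmax \gC_B^{\tmax k} \st (\psi_A \otimes \phi^{\otimes k})(y) = c \right\} \]
is compact, and any $y_k$ satisfying condition~(iii) all the way down to level~$1$ obeys $(\psi_A \otimes \phi^{\otimes k})(y_k) = (\psi_A \otimes \phi)(x) = c$, so every symmetric max $k$-extension of $x$ lies in $B_k$.

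Next I build the towers. Since $x \in \Ext_\iy(\gC_A,\gC_B,\phi)$, for every $N \geq 1$ there is a symmetric max $N$-extension $z_N \in \left( V_A \otimes \Sym_N(V_B) \right) \cap \left( \gC_A \tmax \gC_B^{\tmax N} \right)$ of $x$, and for $1 \leq k \leq N$ I set $z_N^{(k)} := \left( \Id_{V_A} \otimes \Id_{V_B}^{\otimes k} \otimes \phi^{\otimes (N-k)} \right)(z_N)$. Because contracting a symmetric tensor with $\phi$ in the last slots again yields a symmetric tensor, and because $\phi \in \gC_B^*$ is a positive functional while the maximal tensor product is functorial under positive maps, each $z_N^{(k)}$ lies in $B_k$; moreover $z_N^{(1)} = x$ and $\left( \Id_{V_A} \otimes \Id_{V_B}^{\otimes (k-1)} \otimes \phi \right)(z_N^{(k)}) = z_N^{(k-1)}$. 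Using compactness of the $B_k$ and Bolzano--Weierstrass, I extract nested infinite index sets $\N = S_1 \supseteq S_2 \supseteq \cdots$ such that $z_N^{(k)} \to y_k$ as $N \to \iy$ along $S_k$; I set $y_1 = x$ and $y_0 := \left( \Id_{V_A} \otimes \phi \right)(x)$.

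It remains to verify that $(y_k)_{k \geq 0}$ is compatible, and this part is routine. Membership~(ii) is inherited from the closed sets $B_k \subseteq \left( V_A \otimes \Sym_k(V_B) \right) \cap \left( \gC_A \tmax \gC_B^{\tmax k} \right)$. For~(iii) with $k \geq 2$, continuity of the contraction map together with $S_k \subseteq S_{k-1}$ gives $\left( \Id_{V_A} \otimes \Id_{V_B}^{\otimes (k-1)} \otimes \phi \right)(y_k) = \lim_{N \in S_k} z_N^{(k-1)} = y_{k-1}$, the last equality holding because $z_N^{(k-1)} \to y_{k-1}$ already along the larger set $S_{k-1}$; the case $k = 1$ holds by the definition of $y_0$. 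Finally $\psi_A(y_0) = c > 0$, so $y_0 \in \gC_A \setminus \{0\}$, giving~(i). I expect the compactness step to be the main obstacle: everything hinges on showing that $\psi_A \otimes \phi^{\otimes k}$ is interior to the dual cone, for this is precisely what prevents the $k$-extensions from escaping to infinity and ensures the limits $y_k$ exist; the diagonal extraction and the verification of (i)--(iii) are then standard continuity and closedness arguments.
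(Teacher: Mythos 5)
Your proof is correct and follows essentially the same route as the paper's: take arbitrary high-level extensions of $x$, project them down to all lower levels, and extract convergent subsequences by compactness and a diagonal argument. The only substantive difference is that you justify the compactness of the sets of $k$-extensions (by exhibiting the interior functional $\psi_A \otimes \phi^{\otimes k}$ of the dual cone, which pins each extension to a compact slice), a fact the paper asserts without proof.
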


\begin{proof}
We necessarily have $y_1=x$ and $y_0 = (\Id_{V_A} \otimes \phi)(x)$. For every $k \geq 2$, let $x_k$ be an arbitrary $k$-extension of $x$. The compatibility can be enforced by a compactness argument. For $k \leq n$, the vector 
\[ y_{k,n} = ( \Id_{V_A} \otimes \Id_{V_B^{\otimes k}} \otimes  \phi^{\otimes (n-k)} )(x_n) \]
is a $k$-extension of $x$. Since for every integer $k$ the set of $k$-extensions of $x$ is compact, by a diagonal extraction process (see, e.g., \cite[Theorem I.24]{ReedSimon72}), we may find an increasing function $g : \N \to \N$ such that the limit $y_k = \lim_{n \to \infty} y_{k,g(n)}$ exists for every $k \geq 2$. The sequence $(y_k)_{k \geq 0}$ is compatible, as needed.
\end{proof}


\begin{proposition} \label{proposition:main-compatible-extensions}
Let $\gC_A \subset V_A$, $\gC_B \subset V_B$ be proper cones and $\phi \in \inter( \gC_B^*)$. Let $(y_k)_{k \geq 0}$ be a compatible sequence. Then there exists a pair $(\pi,\alpha)$, where
\begin{enumerate}
    \item[(i)] $\pi$ is a Borel probability measure on $K_\phi$,
    \item[(ii)] $\alpha : K_\phi \to V_A$ is a Borel map such that $\pi(\{ \alpha \in \gC_A\})=1$,
    \item[(iii)] for every integer $k \geq 0$,
\begin{equation}
\label{eq:main-compatible-extensions}
y_k = \int_{K_\phi} \alpha(\omega) \otimes \omega^{\otimes k} \, \mathrm{d} \pi(\omega). \end{equation}
\end{enumerate}
Moreover, if $(\pi',\alpha')$ is another pair satisfying these three conditions, then $\pi=\pi'$ and $\pi(\{\alpha=\alpha'\})=1$.
\end{proposition}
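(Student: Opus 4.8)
The plan is to reduce the statement to a de Finetti representation theorem for the single cone $\gC_B$, treating the $V_A$-part as an auxiliary label that is extracted afterwards by a Radon--Nikodym argument. Throughout I fix a linear form $\xi_0 \in \inter(\gC_A^*)$ and set $c = \langle \xi_0, y_0\rangle$, which is strictly positive since $y_0 \in \gC_A \setminus \{0\}$ and $\xi_0$ is strictly positive on $\gC_A \setminus \{0\}$. For each $\xi \in \gC_A^*$ I would form the family $b_k^\xi = (\xi \otimes \Id_{V_B}^{\otimes k})(y_k)$. Using that $\xi \otimes \psi_1 \otimes \dots \otimes \psi_k \in \gC_A^* \tmin (\gC_B^*)^{\tmin k} = (\gC_A \tmax \gC_B^{\tmax k})^*$ for $\psi_i \in \gC_B^*$, property (ii) gives $b_k^\xi \in \gC_B^{\tmax k} \cap \Sym_k(V_B)$; property (iii) gives the consistency $(\Id_{V_B}^{\otimes (k-1)} \otimes \phi)(b_k^\xi) = b_{k-1}^\xi$; and $\phi^{\otimes k}(b_k^\xi) = \langle \xi, y_0\rangle$. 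Thus $(b_k^{\xi_0}/c)_k$ is a consistent, symmetric, infinitely extendible $\phi$-normalized state of $\gC_B$. The crucial input is the de Finetti theorem for general cones: applying the finite, quantitative theorem of Barrett and Leifer \cite{BarrettLeifer} at each level $n$ and extracting a weak-$*$ limit of the resulting measures on the compact base $K_\phi$ (legitimate since $\omega \mapsto \omega^{\otimes k}$ is continuous and bounded on $K_\phi$), I obtain for each $\xi \in \gC_A^*$ a Borel measure $m_\xi$ on $K_\phi$, of total mass $\langle \xi, y_0\rangle$, with $b_k^\xi = \int_{K_\phi} \omega^{\otimes k}\, \mathrm{d}m_\xi(\omega)$ for every $k$; I then set $\pi = m_{\xi_0}/c$.

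Next I would reconstruct $\alpha$. Moments determine measures on $K_\phi$ (Stone--Weierstrass: the affine functions $\omega \mapsto \psi(\omega)$, $\psi \in V_B^*$, separate points and contain the constant $1 = \phi|_{K_\phi}$), so each $m_\xi$ is unique and $\xi \mapsto m_\xi$ is linear. Given $\xi \in \gC_A^*$, interiority of $\xi_0$ yields $M>0$ with $M\xi_0 - \xi \in \gC_A^*$, whence $M m_{\xi_0} - m_\xi = m_{M\xi_0-\xi} \geq 0$, i.e.\ $m_\xi \leq Mc\,\pi$; thus $m_\xi \ll \pi$ with bounded density $h_\xi = \mathrm{d}m_\xi/\mathrm{d}\pi$. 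Writing the elements of a basis $\xi_1,\dots,\xi_m$ of $V_A^*$ as differences of elements of $\gC_A^*$, I define a Borel map $\alpha : K_\phi \to V_A$ by $\langle \alpha(\omega), \xi_i\rangle = h_{\xi_i}(\omega)$. By construction $\int_{K_\phi} \langle \alpha(\omega),\xi\rangle\, \omega^{\otimes k}\, \mathrm{d}\pi(\omega) = \int_{K_\phi} \omega^{\otimes k}\, \mathrm{d}m_\xi(\omega) = (\xi \otimes \Id)(y_k)$ for every $\xi$, which is exactly \eqref{eq:main-compatible-extensions}. Taking a countable dense subset of $\gC_A^*$ and intersecting the full-measure sets where $\langle \alpha, \cdot\rangle \geq 0$ gives $\alpha(\omega) \in \gC_A$ for $\pi$-a.e.\ $\omega$; moreover $\langle \alpha, \xi_0\rangle = h_{\xi_0} = c$ $\pi$-a.e., so the constructed $\alpha$ is automatically normalized and nonzero a.e.

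For uniqueness, pairing \eqref{eq:main-compatible-extensions} with an arbitrary $\xi \in V_A^*$ shows that the signed measures $\langle \alpha,\xi\rangle\,\mathrm{d}\pi$ and $\langle\alpha',\xi\rangle\,\mathrm{d}\pi'$ have identical moments $\int_{K_\phi} \omega^{\otimes k}$ for all $k$, hence coincide by Stone--Weierstrass; equivalently the $V_A$-valued measure $\alpha\,\mathrm{d}\pi = \alpha'\,\mathrm{d}\pi'$ is uniquely determined by $(y_k)$. Testing against $\xi_0$ recovers $\pi$ and $\alpha$ from this vector measure via the normalization $\langle \alpha, \xi_0\rangle \equiv c$ (which also forces $\alpha \neq 0$ a.e.): one reads off $\pi = c^{-1}\langle\alpha,\xi_0\rangle\,\mathrm{d}\pi = c^{-1}\langle\alpha',\xi_0\rangle\,\mathrm{d}\pi' = \pi'$, and then $\alpha = \alpha'$ $\pi$-almost everywhere as the common Radon--Nikodym density of $\alpha\,\mathrm{d}\pi$ with respect to $\pi$.

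The step I expect to be the main obstacle is the de Finetti passage: upgrading the finite, approximate Barrett--Leifer statement to an \emph{exact} representation of the infinite exchangeable family by a single Borel probability measure on $K_\phi$, valid simultaneously for the entire sequence $(b_k^{\xi_0})_k$. Everything downstream---the domination estimate, the Radon--Nikodym disintegration producing a Borel $\alpha$, and the uniqueness---is soft once that exact representation and moment-determinacy on the compact base $K_\phi$ are in hand.
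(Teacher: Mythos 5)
Your argument is correct in substance but follows a genuinely different route from the paper's. The paper deliberately avoids using the Barrett--Leifer theorem as a black box: it first proves the statement when both cones are simplicial, where it is exactly the classical \emph{conditional} de Finetti theorem (the $A$-side plays the role of the conditioning variable $W$, and $\alpha$ is assembled from the Radon--Nikodym derivatives $\alpha_i = \mathrm{d}\pi_i/\mathrm{d}\pi$), and then handles general cones by enclosing $\gC_A$ and $\gC_B$ in simplicial cones and invoking the uniqueness statement together with Lemma~\ref{lemma:intersection-of-simplices} (every convex body is a countable intersection of simplices) to force the support of $\pi$ into $K_\phi$ and the values of $\alpha$ into $\gC_A$ a.e. You instead scalarize the $A$-side: for each $\xi \in \gC_A^*$ the sequence $b_k^\xi = (\xi \otimes \Id_{V_B}^{\otimes k})(y_k)$ is a consistent exchangeable family in $\gC_B^{\tmax k}$ (your positivity check via $\gC_A^* \tmin (\gC_B^*)^{\tmin k} = (\gC_A \tmax \gC_B^{\tmax k})^*$ is right), you apply the cone de Finetti theorem of \cite{BarrettLeifer} to obtain measures $m_\xi$ on $K_\phi$, and you reassemble $\alpha$ by Radon--Nikodym differentiation against $\pi = m_{\xi_0}/c$, using the domination $m_\xi \leq Mc\,\pi$ furnished by interiority of $\xi_0$. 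The moment-determinacy via Stone--Weierstrass on the compact base, the linearity of $\xi \mapsto m_\xi$, and the countable-dense-subset argument for $\alpha \in \gC_A$ a.e.\ all go through. What your route buys: the representing measure lives on $K_\phi$ from the start, so you never need the simplicial-approximation machinery; what it costs: dependence on the general-cone de Finetti theorem, which is precisely the dependence the paper's ``new simpler proof'' was designed to remove (the paper notes the proposition could be deduced from \cite{BarrettLeifer}, and that is what you do).

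One caveat concerns uniqueness. What your argument correctly shows is that the $V_A$-valued measure $\alpha\,\mathrm{d}\pi$ is determined by $(y_k)_{k \geq 0}$, and that $(\pi,\alpha)$ is determined once one imposes the normalization $\langle \alpha,\xi_0\rangle \equiv c$. The last link in your chain, $c^{-1}\langle \alpha',\xi_0\rangle\,\mathrm{d}\pi' = \pi'$, silently assumes this normalization for the \emph{competing} couple, and it does not follow from conditions (i)--(iii) alone: given any solution, setting $\mathrm{d}\pi' = g\,\mathrm{d}\pi$ and $\alpha' = \alpha/g$ for a bounded positive Borel $g$ with $\int g\,\mathrm{d}\pi = 1$ yields another solution with $\pi' \neq \pi$. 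So the literal claim $\pi = \pi'$ requires an implicit normalization to be true; the paper's construction carries the same canonical normalization (in the simplicial case, $\sum_i \alpha_i = 1$ a.e.). This is an imprecision of the statement rather than a defect specific to your proof, but you should make the normalization explicit when asserting uniqueness.
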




The result of Proposition~\ref{proposition:main-compatible-extensions} is very similar to (and could be deduced from) a theorem by Barrett and Leifer \cite{BarrettLeifer} which is a version of the de Finetti theorem for arbitrary cones (see also \cite{ChristandlToner09} for related results). This extends the argument given in \cite{CFS02} for the quantum case. For the reader's convenience, we will present a new simpler proof of Proposition~\ref{proposition:main-compatible-extensions} using directly the classical de Finetti theorem.

Let us recall the statement of the de Finetti theorem. Let $F$ be a finite set and $(Y_n)_{n \geq 1}$ be a sequence of $F$-valued random variables which is \emph{exchangeable} (i.e., for any integer $n \geq 1$ and any permutation $\sigma \in \mathfrak{S}_n$, the tuples $(Y_1,\dots,Y_n)$ and $(Y_{\sigma(1)},\dots,Y_{\sigma(n)})$ have the same distribution). The de Finetti theorem asserts that there is a unique Borel probability measure~$\pi$ on the set $\Prob(F)$ of probability measures on $F$ such that for every $n \geq 1$ and $x_1,\dots,x_n \in F^n$,
\[ \P(X_1=x_1,\dots,X_n=x_n) = \int_{\Prob(F)}  p(x_1)\dots p(x_n) \, \mathrm{d}\pi(p). 
\]
We need a small extension of this statement. Assume that $W$ is an additional random variable taking values in a finite set $E$ and such that the sequence $(X_n)$ is exchangeable conditionally on $W$. Then there is a unique family $(\pi_i)_{i \in E}$ of positive Borel measures on $\Prob(F)$ such that, for $i \in E$
\begin{equation} \label{eq:conditional-de-Finetti} \P(W=i,X_1=x_1,\dots,X_n=x_n)  = \int_{\Prob(F)}  p(x_1)\dots p(x_n) \, \mathrm{d}\pi_i(p). 
\end{equation}
We necessarily have $\pi = \sum_{i \in E} \pi_i$. For $i \in E$, denote by $\alpha_i$ the Radon--Nikodym derivative of $\pi_i$ with respect to $\pi$. We can rewrite \eqref{eq:conditional-de-Finetti} as
\begin{equation} \label{eq:conditional-de-Finetti2} \P(W=i,X_1=x_1,\dots,X_n=x_n)  = \int_{\Prob(F)} \alpha_i(p) p(x_1)\dots p(x_n) \, \mathrm{d}\pi(p). 
\end{equation}

We also use the following simple lemma.

\begin{lemma} \label{lemma:intersection-of-simplices}
Let $K \subset \R^n$ be a convex body, i.e., a compact convex set with non-empty interior. There is a sequence $(\Delta_k)_{k \geq 1}$ of simplices such that $\bigcap \Delta_k = K$.
\end{lemma}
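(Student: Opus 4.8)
The plan is to prove Lemma~\ref{lemma:intersection-of-simplices}: every convex body $K\subset\R^n$ is the intersection of a countable family of simplices. The key idea is that $K$ is automatically the intersection of \emph{all} closed half-spaces containing it (this is the supporting-hyperplane description of a closed convex set), so it suffices to approximate this half-space description by simplices while ensuring each simplex still contains $K$. Since $K$ is compact, for each rational-rational (or at least dense) collection of supporting half-spaces we can select finitely many whose intersection is a polytope containing $K$, and then enlarge this polytope to a simplex.

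\medskip

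\noindent First I would fix a point in the interior of $K$, say the origin after translation, so that $K$ is a compact convex body containing a small ball $B(0,r)$ and contained in a large ball $B(0,R)$. For each unit vector $u\in\R^n$ let $h_K(u)=\sup_{x\in K}\langle u,x\rangle$ be the support function; then
\[
K=\bigcap_{u\in S^{n-1}}\{x\in\R^n\st \langle u,x\rangle\leq h_K(u)\}.
\]
Because $h_K$ is continuous on the compact sphere $S^{n-1}$, a standard compactness argument shows that for every $\e>0$ there is a \emph{finite} set of directions $u_1,\dots,u_m$ such that the polytope
\[
P_\e=\bigcap_{i=1}^m\{x\st \langle u_i,x\rangle\leq h_K(u_i)\}
\]
satisfies $K\subset P_\e\subset K+\e B(0,1)$; one takes the $u_i$ to be an $\e'$-net of the sphere for suitable $\e'$, using that $K$ contains a ball so that small outward perturbations of the supporting half-spaces only move the boundary by a controlled amount. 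This produces a sequence of polytopes $P_{1/k}$ with $K\subset P_{1/k}$ and $\bigcap_k P_{1/k}=K$.

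\medskip

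\noindent It then remains to pass from polytopes to simplices. Here I would use the elementary fact that every bounded polytope $P$ with nonempty interior is the intersection of finitely many simplices: indeed, if $P=\bigcap_{i=1}^m H_i$ is a representation as an intersection of $m$ half-spaces, one can group the facet hyperplanes into blocks of at most $n+1$ and, since any $n+1$ affinely independent half-spaces containing an interior point can be complemented by one bounding half-space far from $K$ to form a simplex, realize $P$ (or a slightly enlarged polytope still containing $K$) as a finite intersection of simplices. Replacing each $P_{1/k}$ by the finite collection of simplices whose intersection it is, and taking the countable union of all these simplices as the sequence $(\Delta_k)_{k\geq 1}$, we obtain $\bigcap_k\Delta_k=\bigcap_k P_{1/k}=K$, as desired.

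\medskip

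\noindent The main obstacle I anticipate is the simplex-from-polytope step: one must verify that each half-space in the polyhedral description can be incorporated into a genuine bounded simplex that still contains $K$, rather than merely an unbounded polyhedral cone. The clean way to handle this is to add, for each simplex, a single auxiliary ``capping'' half-space whose bounding hyperplane is placed far enough from the origin (using $K\subset B(0,R)$) that it does not cut into $K$; this guarantees boundedness while preserving $K\subset\Delta_k$. The continuity of the support function and the uniform inner/outer ball bounds are exactly what make the net argument and the capping construction go through quantitatively.
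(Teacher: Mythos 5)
Your overall strategy---reduce $K$ to an intersection of half-spaces and then fatten the half-space description into simplices containing $K$---is the same idea as the paper's proof, but the crucial polytope-to-simplex step is not correct as written. Grouping the facet half-spaces of $P_{1/k}$ into ``blocks of at most $n+1$'' and then ``complementing by one bounding half-space'' does not produce a simplex: the intersection of $n+2$ half-spaces in $\R^n$ generically has $n+2$ facets, and the intersection of $n+1$ half-spaces whose normals do not positively span $(\R^n)^*$ is unbounded, so a single cap cannot repair it. The same miscount appears in your final paragraph (``a single auxiliary capping half-space''): to turn one half-space into a bounded simplex you need $n$ additional half-spaces, not one.

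The correct (and simpler) construction, which is what the paper does, works one half-space at a time. Given a half-space $\{f_1\le B\}$ containing $K$ (obtained by Hahn--Banach separation from a point $x\notin K$, so no support-function net is needed), choose linear forms $f_2,\dots,f_{n+1}$ with $0\in\inter\conv(f_1,\dots,f_{n+1})$; then for $A>0$ the set $\{y \st f_1(y)\le B,\ f_i(y)\le A \textnormal{ for } i\ge 2\}$ is a bounded simplex, and for $A$ large enough it contains $K$ while still excluding $x$. Running over a suitable countable family of excluded points (or of rational separating functionals) gives the sequence $(\Delta_k)$. Your $\e$-net polytope approximation $P_{1/k}$ is harmless but unnecessary: one never needs $\bigcap_k\Delta_k$ to equal any particular polytope, only to exclude every point outside $K$, so the separation argument suffices. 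If you repair the capping step as above, your proof goes through.
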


\begin{proof}
Without loss of generality, assume that $0 \in K$. It suffices to show that for every $x \in \R^n \setminus K$, there is a simplex $\Delta$ containing $K$ but not $x$. This can be seen as follows: by the Hahn--Banach separation theorem, there is a linear form $f_1$ such that $f_1(x) > B := \max_K f_1$. We may find linear forms $f_2,\dots,f_{n+1}$ such that $0 \in \inter \conv (f_1,\dots,f_{n+1})$. For $A>0$, the set
\[ \Large\{ x \in \R^n \st f_1(x) \leq B,\  f_i(x) \leq A \textnormal{ for } i \in \{2,\dots,n+1\} \Large\} \]
is a simplex which does contain $x$; for $A$ large enough, it contains $K$ as needed.
\end{proof}

\begin{proof}[Proof of Proposition \ref{proposition:main-compatible-extensions}]
Suppose first that the cones $\gC_A$ and $\gC_B$ are simplicial. Without loss of generality, we may then identify $\gC_A$ and $\gC_B$ with the cone of positive measures on some finite sets~$E$ and $F$, and $K_\phi$ with the set of probability measures on $F$. Up to rescaling, we may also assume that~$y_0$ is a probability measure on $E$. In that case, the conclusion is equivalent to the conditional de Finetti theorem as in \eqref{eq:conditional-de-Finetti2}.

Consider now the general case. We use repeatedly the fact that the sequence $(y_k)_{k \geq 0}$ is a compatible sequence for $(\gC,\gC',\phi)$ whenever $\gC$ and $\gC'$ are proper cones containing respectively $\gC_A$ and~$\gC_B$, and such that $\phi \in \inter(\gC'^*)$. In particular, choosing $\gC$ and $\gC'$ to be simplicial cones, we obtain the existence of
a positive Borel measure $\pi=\pi_{\gC,\gC'}$ on $\Delta := \gC' \cap \phi^{-1}(1)$ and a Borel map $\alpha=\alpha_{\gC,\gC'}$ such that for every $k \geq 0$
\[ y_k = \int_{\Delta} \alpha(\omega) \otimes \omega^{\otimes k} \, \mathrm{d} \pi(\omega). \]
To finish the proof, it suffices to show that (i) the measure $\pi$ is supported on $K_\phi$ and (ii) the function $\alpha$ is $\pi$-almost surely $\gC_A$-valued; the uniqueness property for $(\gC_A,\gC_B)$ then follows from the uniqueness property for $(\gC,\gC')$.

Let $\Delta_1$ be another simplex in the affine hyperplane $\phi^{-1}(1)$ such that $K_\phi \subset \Delta_1$ and $\Delta_2$ a third simplex such that $\Delta \cup \Delta_1 \subset \Delta_2$. By the uniqueness property for $(\gC,\cone(\Delta_2))$, it follows that $\pi=\pi_{\gC,\cone(\Delta_1)}=\pi_{\gC,\cone(\Delta_2)}$. In particular, the measure $\pi$ is supported in $\Delta \cap \Delta_1$. Applying Lemma \ref{lemma:intersection-of-simplices} proves that $\pi$ is supported on $K_\phi$.

Similarly, let $\gC_1$ be another simplicial cone such that $\gC_A\subset \gC_1$ and $\gC_2$ a third simplicial cone such that $\gC \cup \gC_1\subset \gC_2$. By the uniqueness property for $(\gC_2,\gC')$,  we have $\alpha=\alpha_{\gC_1,\gC'}=\alpha_{\gC_2,\gC'}$ $\pi$-a.s. It follows that~$\alpha$ takes $\pi$-a.s.\ values in $\gC \cap \gC_1$. Applying Lemma \ref{lemma:intersection-of-simplices} proves that $\alpha$ takes $\pi$-a.s.\ values in $\gC_A$.
\end{proof}

\begin{proof}[Proof of Theorem \ref{thm:Main1}]
By Lemma \ref{lemma:compatible-extension}, given an arbitrary element $x \in \Ext_{\iy}(\gC_A,\gC_B,\phi)$, there is a compatible sequence $(y_k)_{k \geq 0}$ with $y_1=x$. Equation \eqref{eq:main-compatible-extensions} applied with $k=1$ shows that $x=y_1 \in \gC_A \tmin \gC_B$.
\end{proof}

\begin{remark}
\normalfont
Theorem \ref{thm:Main1} can be extended to the case of more than two factors. Given proper cones $\gC_A,\gC_{B_1},\dots,\gC_{B_d}$ and for every $i \in [d]$ a linear form $\phi_i \in \inter(\gC_{B_i}^*)$, the minimal tensor product $ \gC_A \tmin \gC_{B_1} \tmin \dots \tmin \gC_{B_d}$ is equal to
\[\bigcap_{k_1,\dots k_d \geq 1}
\left( \Id_{V_A} \otimes \gamma_{k_1}^{\phi_1} \otimes \dots \otimes \gamma_{k_d}^{\phi_d} \right)\left( \gC_A \tmax \gC_{B_1}^{\tmax k_1} \tmax \dots \tmax \gC_{B_d}^{\tmax k_d} \right).
\]
The proof of this result is  similar to the proof of Theorem \ref{thm:Main1}. The corresponding classical de Finetti theorem is the following. Suppose that $(X_n^i)_{n \in \N, i \in [d]}$ is a family of random variables such that, for every finitely supported permutations $\sigma_1,\dots,\sigma_d$ of the integers, $(X_{\sigma_i(n)}^i)_{n \in \N, i \in [d]}$ has the same distribution as $(X_n^i)_{n \in \N, i \in [d]}$. Then~$(X^i_n)$ is distributed as a mixture of i.i.d.\ variables, each element in the mixture being distributed as an independent tuple $(Y_i)_{i \in [d]}$. We omit details.
\end{remark}

\section{Proof of Theorem~\ref{thm:Main2}}\label{sec:HierarchyFinite}

Fix a proper cone $\gC \subset V$ and $\phi \in \inter (\gC^*)$. The $k$th reduction map $\gamma_k^\phi : V^{\otimes k} \to V$ can be viewed as a tensor, namely the unique element of $(V^*)^{\otimes k} \otimes V$ such that
\begin{equation} \label{eq:gamma-pure-tensor} \langle \gamma_k^\phi, x_1 \otimes \dots \otimes x_k \otimes \psi \rangle = \frac{\psi(x_1) + \dots + \psi(x_k)}{k}.\end{equation}
for every  $x_1,\dots,x_k \in K_\phi$ and $\psi \in V^*$.

We say that the $k$th reduction map $\gamma_k^\phi$ is \emph{entanglement-breaking} if, as a tensor, it belongs to $(\gC^*)^{\tmin k} \tmin  \gC$. We use the following basic property of entanglement-breaking maps (see for instance \cite[Proposition 2.2]{AMH22}). Given proper cones $\gC_1 \subset V_1$ and $\gC_2 \subset V_2$ and a linear map $\Phi : V_1 \to V_2$, the following are equivalent:
\begin{enumerate}
\item[(i)] The map $\Phi$, as a tensor in $V_1^* \otimes V_2$, belongs to $\gC_1^* \tmin \gC_2$;
\item[(ii)] For every proper cone $\gC_A \subset V_A$, we have 
\[(\Id_{V_A} \otimes \Phi)(\gC_A \tmax \gC_1) \subset \gC_A \tmin \gC_2.\]
\end{enumerate}
Using this equivalence for the cones $\gC_1 = \gC^{\tmax k}$, $\gC_2 = \gC$ and the map $\Phi = \gamma_k^\phi$, we may reformulate Theorem \ref{thm:Main2} as follows.

\newtheorem*{theorem:restated}{Theorem~\ref{thm:Main2}'}

\begin{theorem:restated}
Let $\gC \subset V$ be a proper cone, $\phi \in \inter (\gC^*)$ and $k \geq 1$ an integer. Set $K_\phi = \gC \cap \phi^{-1}(1)$. The following are equivalent:
\begin{enumerate}
\item[(i)] The $k$th reduction map $\gamma_k^\phi$ is entanglement-breaking;
\item[(ii)] The base $K_\phi$ is affinely isomorphic to the Cartesian product of at most $k$ simplices.
\end{enumerate}
\end{theorem:restated}

If $P$ is a polytope, we denote by $\mathcal{V}(P)$ the set of its vertices and by $\mathcal{F}(P)$ the set of its facets (=faces of codimension $1$). We will start by proving the easy direction:

\begin{proof}[Proof that $(ii)$ implies $(i)$ in Theorem~\ref{thm:Main2}']
A basic observation is that, given two polytopes $P$ and $Q$, the faces of the Cartesian product $P \times Q$ have the form $F \times G$, where $F$ is face of $P$ and $G$ is face if $Q$. Consider simplices $\Delta_1,\dots,\Delta_k$, possibly $0$-dimensional, and the polytope $\Pi = \Delta_{1} \times \dots \times \Delta_{k}$. Its vertices have the form $(v_1,\dots,v_k)$ for $v_i \in \mathcal{V}(\Delta_{i})$ and its facets have the form
\begin{equation} \label{eq:facets-Pi}
\Delta_{1} \times \dots \times \Delta_{i-1} \times F \times \Delta_{i+1} \times \dots \times \Delta_{k} ,\end{equation}
for $i \in [k]$ and $F \in \mathcal{F}(\Delta_{i})$. Assume that $K_\phi$ is affinely isomorphic to $\Pi$. We may label the vertices of $K_\phi$ as 
\[ \{ x_{v_1,\dots,v_k} \st v_i \in \mathcal{V}(\Delta_i) \} \]
and define, for every $i \in [k]$ and $v \in \mathcal{V}(\Delta_i)$, an affine map $\psi_i^v : \aff(K_\phi) \to \R$ by
\[ \psi_i^v(x_{v_1,\dots,v_k}) = \begin{cases} 1 & \textnormal{ if } v_i=v \\ 0 & \textnormal{otherwise.} \end{cases} \]
We may extend $\psi_i^v$ to a linear map defined on $V$, which we still denote by $\psi_i^v$, and which is an element of $\gC^*$. We observe that for every $j \in [k]$,
\begin{equation} \label{eq:sum=phi} \phi = \sum_{v \in \mathcal{V}(\Delta_j)} \psi_j^v \end{equation}
since both sides evaluate to $1$ on any vertex of $K_\phi$.
Let $i \in [k]$, $v \in \mathcal{V}(\Delta_i)$ and $z_1,\dots,z_k \in K_\phi$. We compute, using the fact that $P_{\Sym_k(V)}^* = P_{\Sym_k(V^*)}$ and \eqref{eq:sum=phi} for every $j \neq i$, 
\begin{align*} 
 \mathllap{\left\langle 
\sum_{v_1 \in \mathcal{V}(\Delta_1)} \dots  \sum_{v_k \in \mathcal{V}(\Delta_k)}  P_{\Sym_k(V^*)} ( \psi_{1}^{v_1} \otimes \dots \otimes \psi_{k}^{v_k} )
\otimes x_{v_1,\dots,v_k}, z_1 \otimes \dots \otimes z_k \otimes \psi_i^v
\right\rangle}
\\
\begin{aligned}
 & = \sum_{v_1 \in \mathcal{V}(\Delta_1)} \dots  \sum_{v_k \in \mathcal{V}(\Delta_k)}
 {\bf 1}_{\{ v_i=v\}} \langle \psi_{1}^{v_1} \otimes \dots \otimes \psi_{k}^{v_k} , P_{\Sym_k(V)} (z_1 \otimes \dots \otimes z_k) \rangle 
\\ & =  \langle \phi^{\otimes (i-1)}  \otimes \psi_i^v \otimes \phi^{\otimes (k-i)}, P_{\Sym_k(V)} (z_1 \otimes \dots \otimes z_k ) \rangle 
\\ & =  \frac{\psi_i^v(z_1) + \dots + \psi_i^v(z_k)}{k}
\\ & = \langle \gamma_k^\phi, z_1 \otimes \dots \otimes z_k \otimes \psi_i^v \rangle.
\end{aligned}
\end{align*}
Since elements of the form $z_1 \otimes \dots \otimes z_k \otimes \psi_i^v$ span the space $V^{\otimes k} \otimes V^*$, we conclude that
\begin{equation*} 
\gamma_k^\phi = 
\sum_{v_1 \in \mathcal{V}(\Delta_1)} \dots  \sum_{v_k \in \mathcal{V}(\Delta_k)}
 P_{\Sym_k(V^*)} ( \psi_{1}^{v_1} \otimes \dots \otimes \psi_{k}^{v_k} )
\otimes x_{v_1,\dots,v_k}  \end{equation*}
and therefore the map $\gamma_k^\phi$ is entanglement-breaking.
\end{proof}

The second half of the proof of Theorem~\ref{thm:Main2}' requires a couple of preparatory lemmas and some terminology about polytopes. Let $P$ denote a polytope. The \emph{avoiding set} of a vertex $x \in \mathcal{V}(P)$ is given by
\[ 
\Av(x) = \{ F \in \mathcal{F}(P) \st x \not \in F \},
\] 
i.e., the set of facets not containing $x$. We say that a tuple $(F_1,\dots,F_k,x) \in  \mathcal{F}(P)^k \times \mathcal{V}(P)$ is \emph{admissible} if $\Av(x) \subset \{F_1,\dots,F_k\}$. 

\begin{lemma} \label{lemma:gamma-is-polytope}
If $\gamma_k^\phi$ is entanglement-breaking, then $K_\phi$ is a polytope.
\end{lemma}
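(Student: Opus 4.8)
The plan is to unpack the definition of entanglement-breaking and combine it with the elementary fixed-point identity $\gamma_k^\phi(x^{\otimes k}) = x$, valid for every $x \in K_\phi$ (take all inputs equal to $x$ and $\psi$ arbitrary in \eqref{eq:gamma-pure-tensor}). Since $\gamma_k^\phi$ is entanglement-breaking, it lies in $(\gC^*)^{\tmin k} \tmin \gC$, so by definition of the minimal tensor product it admits a \emph{finite} decomposition
\[ \gamma_k^\phi = \sum_{m=1}^M \psi_{m,1} \otimes \dots \otimes \psi_{m,k} \otimes x_m, \qquad \psi_{m,j} \in \gC^*,\ x_m \in \gC. \]
First I would normalize this decomposition: discard the terms with $x_m = 0$, and for the remaining terms note that $\phi \in \inter(\gC^*)$ forces $\phi(x_m) > 0$, so after rescaling $x_m$ by $\phi(x_m)^{-1}$ (absorbing the scalar into, say, $\psi_{m,1}$) we may assume $x_m \in K_\phi$ for every $m$.

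The core of the argument is then a single computation. Reading off the action of $\gamma_k^\phi$ as a map $V^{\otimes k} \to V$ from the tensor decomposition and evaluating at $x^{\otimes k}$ for $x \in K_\phi$ gives
\[ x = \gamma_k^\phi(x^{\otimes k}) = \sum_{m=1}^M c_m(x)\, x_m, \qquad c_m(x) := \psi_{m,1}(x) \cdots \psi_{m,k}(x) \geq 0, \]
where nonnegativity of each $c_m(x)$ follows from $\psi_{m,j} \in \gC^*$ and $x \in K_\phi \subset \gC$. Applying $\phi$ to both sides and using $\phi(x) = \phi(x_m) = 1$ yields $\sum_m c_m(x) = 1$, so each $x \in K_\phi$ is a genuine convex combination of the fixed finite family $x_1,\dots,x_M$.

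This proves $K_\phi \subset \conv\{x_1,\dots,x_M\}$; conversely, each $x_m$ lies in $K_\phi$ and $K_\phi$ is convex, so $\conv\{x_1,\dots,x_M\} \subset K_\phi$, giving $K_\phi = \conv\{x_1,\dots,x_M\}$, a polytope. I do not expect a genuine obstacle here: the statement reduces to bookkeeping with the tensor pairing together with the identity $\gamma_k^\phi(x^{\otimes k}) = x$. The only point requiring a little care is the normalization step, which relies on the standard fact that a functional in $\inter(\gC^*)$ is strictly positive on $\gC \setminus \{0\}$; this is what lets us place all the $x_m$ on the base $K_\phi$ and turn the cone decomposition into an honest convex combination.
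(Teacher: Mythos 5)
Your proof is correct and follows essentially the same route as the paper: write the entanglement-breaking tensor as a finite sum $\sum_m \psi_{m,1}\otimes\cdots\otimes\psi_{m,k}\otimes x_m$, use the identity $\gamma_k^\phi(x^{\otimes k})=x$ for $x\in K_\phi$ to exhibit every point of $K_\phi$ as a nonnegative combination of the $x_m$, and conclude $K_\phi=\conv\{x_1,\dots,x_M\}$. The only difference is that you spell out the normalization (rescaling the $x_m$ onto $K_\phi$ and applying $\phi$ to see the coefficients sum to $1$), which the paper leaves implicit; this is a welcome clarification, not a deviation.
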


\begin{proof}
By assumption, there exist an integer $N \geq 1$, elements $x_1,\dots,x_N$ in $K_\phi$ and $(f_i^j)_{i \in [N], j \in [k]}$ in~$\gC^*$ such that
\begin{equation} \label{eq:gamma-decomposition} \gamma_k^\phi = \sum_{i=1}^N f_i^1 \otimes \dots \otimes f_i^k \otimes x_i. \end{equation}
For every $y \in K_\phi$, we have 
\[ y = \gamma_k^\phi \left(y^{\otimes k}\right) = \sum_{i=1}^N \left( \prod_{j=1}^k f_i^j(y) \right) x_i .\]
It follows that $K_\phi \subset \conv \{ x_1,\dots,x_N \}$ and therefore $K_\phi=\conv \{ x_1,\dots,x_N \}$. In particular, $K_\phi$ is a polytope.
\end{proof}

We now assume that $\gamma^\phi_k$ is entanglement-breaking and hence that $K_\phi$ is a polytope. To every facet $F \in \mathcal{F}(K_\phi)$ we associate an element $\psi_F \in \gC^*$ with the property that $F = K_\phi \cap \ker \psi_F$ (this determines $\psi_F$ up to a positive scalar).

\begin{lemma} \label{lemma:gamma-vs-admissible}
If $\gamma_k^\phi$ is entanglement-breaking, then $\gamma_k^\phi$ belongs to the cone generated by elements of the form $\psi_{F_1} \otimes \cdots \otimes \psi_{F_k} \otimes x$, where $(F_1, \ldots, F_k,x) \in \mathcal{F}(K_\phi)^k \times \mathcal{V}(K_\phi)$ is admissible.
\end{lemma}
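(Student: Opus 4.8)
The plan is to start from the entanglement-breaking hypothesis, extract a decomposition of $\gamma_k^\phi$ whose tensor factors are already vertices and facet-functionals, and then let the defining identity \eqref{eq:gamma-pure-tensor} force every surviving term to be admissible. First I would record the decomposition. By hypothesis $\gamma_k^\phi \in (\gC^*)^{\tmin k} \tmin \gC$, which is the cone generated by pure tensors $g_1 \otimes \dots \otimes g_k \otimes x$ with $g_j \in \gC^*$ and $x \in \gC$. By Lemma~\ref{lemma:gamma-is-polytope} the base $K_\phi$ is a polytope, so $\gC = \cone\{v : v \in \mathcal{V}(K_\phi)\}$, while $\gC^*$ is generated by its extreme rays, the facet functionals $\psi_F$ with $F \in \mathcal{F}(K_\phi)$. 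Expanding each $g_j$ and each $x$ of a generator along these extreme rays, I obtain
\[ \gamma_k^\phi = \sum_{i} \lambda_i\, \psi_{F_1^i} \otimes \dots \otimes \psi_{F_k^i} \otimes v_i, \quad \lambda_i > 0,\ F_j^i \in \mathcal{F}(K_\phi),\ v_i \in \mathcal{V}(K_\phi). \]
Writing $S_i = \{F_1^i, \dots, F_k^i\}$ for the multiset of facets occurring in the $i$th term, the goal becomes to prove that $\Av(v_i) \subset S_i$ for every $i$.

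The crux is to feed the right test points into this decomposition. Evaluating \eqref{eq:gamma-pure-tensor} with all $k$ arguments equal to a common $z \in K_\phi$ yields the reproducing identity $z = \sum_i \lambda_i \big(\prod_j \psi_{F_j^i}(z)\big) v_i$. I would apply this with $z = y$ chosen in the relative interior of a fixed facet $G$. Since $\psi_F \in \gC^*$ vanishes on $K_\phi$ exactly along $F$, and a relative-interior point of $G$ lies on no other facet, one has $\psi_G(y) = 0$ while $\psi_F(y) > 0$ for every facet $F \neq G$. Hence a term survives the evaluation at $y$ precisely when $G \notin S_i$, and the identity becomes
\[ y = \sum_{i\,:\,G \notin S_i} \lambda_i \Big(\textstyle\prod_j \psi_{F_j^i}(y)\Big) v_i. \]
Applying $\phi$ and using $\phi(v_i) = \phi(y) = 1$ shows that the coefficients on the right sum to $1$, so this exhibits $y$ as a convex combination of the vertices $v_i$ over the surviving terms.

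Now I would invoke the defining property of a face: since $y$ lies in the face $G$ of $K_\phi$, every vertex occurring with strictly positive coefficient in such a convex representation must itself lie in $G$. As the coefficient of the $i$th surviving term is positive exactly when $\lambda_i > 0$ and $G \notin S_i$, this gives the per-term statement: for every $i$ with $\lambda_i > 0$, if $G \notin S_i$ then $v_i \in G$. Taking the contrapositive, $v_i \notin G$ forces $G \in S_i$; letting $G$ range over all facets not containing $v_i$ yields $\Av(v_i) \subset S_i$, that is, every term is admissible, which is the claim. The only real obstacle is locating this argument: the naive choice $z = v_i$ merely reconfirms that each vertex is reproduced and says nothing about the facets possibly missing from $S_i$, whereas probing each facet $G$ through an interior point and then using the extremality of faces is exactly what converts the single global identity into the desired per-term combinatorial constraint.
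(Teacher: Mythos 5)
Your proof is correct and follows essentially the same route as the paper: the same expansion of $\gamma_k^\phi$ into extreme rays $\psi_{F_1}\otimes\cdots\otimes\psi_{F_k}\otimes v$, the same test points (relative-interior points of facets), and the same complementary-slackness mechanism for killing non-admissible terms. The only difference is packaging --- the paper aggregates all facets into a single ``vertex-facet tensor'' $\omega_k=\sum_F x_F^{\otimes k}\otimes\psi_F$ and uses $\langle\gamma_k^\phi,\omega_k\rangle=0$ in one stroke, whereas you argue facet by facet via the reproducing identity and the extremality of faces, which amounts to pairing each surviving convex combination with $\psi_G$.
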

\begin{proof} 
For every facet $F \in \mathcal{F}(K_\phi)$, pick an arbitrary element $x_F$ in the relative interior of $F$. With this we define the \emph{vertex-facet tensor} as
\begin{equation} \label{equ:VFTensor}
\omega_k = \sum_{F \in \mathcal{F}(K_\phi)} x_F^{\otimes k} \otimes \psi_F \in V^{\otimes k}\otimes V^* .
\end{equation}
Since $\psi_F(x_F)=0$, it follows from \eqref{eq:gamma-pure-tensor} that $\langle \gamma_k^\phi , \omega_k \rangle = 0$. 

Given a tuple $(F_1,\dots,F_k,x) \in \mathcal{F}(K_\phi)^k \times \mathcal{V}(K_\phi)$, we have 
\[ \langle \psi_{F_1} \otimes \dots \otimes \psi_{F_k} \otimes x, \omega_k \rangle 
= \sum_{F \in \mathcal{F}(K_\phi)} \psi_F(x) \prod_{j=1}^k \psi_{F_j}(x_F) 
= \sum_{F \in \Av(x)} \psi_F(x) \prod_{j=1}^k \psi_{F_j}(x_F)
.\]
This quantity is nonnegative and vanishes if and only if, for every $F \in \Av(x)$, there is $j \in [k]$ such that $\psi_{F_j}(x_F)=0$. By definition of $x_F$, we have $\psi_{F_j}(x_F)=0$ if and only if $F=F_j$. We conclude that 
\begin{equation}\label{equ:vFtensPosProd} \langle \psi_{F_1} \otimes \dots \otimes \psi_{F_k} \otimes x, \omega_k \rangle \geq 0 \end{equation}
with equality if and only if $(F_1,\dots,F_k,x)$ is admissible. 

Since $\gC^* = \cone \{ \psi_F \st F \in \mathcal{F}(K_\phi) \}$, we may expand the decomposition \eqref{eq:gamma-decomposition} in the form 
\[
\gamma_k^\phi =  \sum_{F_1 \in \mathcal{F}(K_\phi)} \dots
\sum_{F_k \in \mathcal{F}(K_\phi)} \sum_{x \in \mathcal{V}(K_\phi)} \lambda_{F_1,\dots,F_k,x} \psi_{F_1} \otimes \dots \otimes \psi_{F_k} \otimes x
\]
for some $\lambda_{F_1,\dots,F_k,x} \geq 0$. Since $\langle \gamma_k^\phi,\omega_k \rangle =0$, we conclude from \eqref{equ:vFtensPosProd} that $\lambda_{F_1,\dots,F_k,x} = 0$ whenever $(F_1,\dots,F_k,x)$ is not admissible. This finishes the proof.
\end{proof}

\begin{proof}[Proof that $(i)$ implies $(ii)$ in Theorem~\ref{thm:Main2}']
Assume the map $\gamma_k^\phi$ to be entanglement-breaking. 
By Lemma \ref{lemma:gamma-vs-admissible} the base $K_\phi$ is a polytope. We now show that
\begin{equation}\label{equ:toShow}
\aff \left( \bigcap_{j=1}^n F_j \right) = \bigcap_{j=1}^n \aff ( F_j),
\end{equation}
for any integer $n \geq 1$ and $F_1,\dots,F_n \in \mathcal{F}(K_\phi)$. Since any face is an intersection of facets, \eqref{equ:toShow} implies that~$K_\phi$ satisfies the condition (ii) from Theorem \ref{theorem:polysimplices} and therefore is affinely equivalent to a product of simplices.


Note that $\aff \left( \bigcap F_j \right) \subset \bigcap \aff ( F_j)$. Conversely, consider $a \in  \bigcap \aff(F_j)$ and any $h \in V^*$ which vanishes on $\bigcap F_j$. For every $j \in [n]$, we have $\psi_{F_j}(a)=0$ since $a \in \aff(F_j)$. For any admissible $(G_1,\dots,G_k,y) \in  \mathcal{F}(K_\phi)^k \times \mathcal{V}(K_\phi)$, we have
\begin{equation} \label{eq:toShow2} \langle \psi_{G_1} \otimes \dots \otimes \psi_{G_k} \otimes y, a^{\otimes k} \otimes h \rangle = \psi_{G_1}(a) \dots \psi_{G_k}(a) h(y) = 0. \end{equation}
To prove \eqref{eq:toShow2}, observe that if $h(y) \neq 0$, then $y \not \in \bigcap F_j$ and hence there is an index $j \in [n]$ such that $y \not \in F_j$. This means that $F_j \in \Av(y)$ and therefore $F_j = G_i$ for some $i \in [k]$. We obtain that $\psi_{G_i}(a)=\psi_{F_j}(a)=0$, proving \eqref{eq:toShow2}.

By Lemma \ref{lemma:gamma-vs-admissible}, it follows that $\langle \gamma_k^\phi, a^{\otimes k} \otimes h \rangle = 0$. By \eqref{eq:gamma-pure-tensor}, we have $h(a) = \langle \gamma_k^\phi, a^{\otimes k} \otimes h \rangle$ and therefore $h(a)=0$. Since every $h \in V^*$ which vanishes on $\bigcap F_j$ vanishes at $a$, we have $a \in \mathspan(\bigcap F_j)$. We conclude the proof of \eqref{equ:toShow} by observing that $a \in \aff(K_\phi) \cap \mathspan(\bigcap F_j) = \aff(\bigcap F_j)$.

Theorem~\ref{theorem:polysimplices} tells us that $K_\phi$ is affinely isomorphic to $\Pi = \Delta_{1} \times \dots \times \Delta_{l}$ for some integer $l \geq 1$ and nontrivial simplices $\Delta_1,\dots,\Delta_l$; we still need to show that $l \leq k$.
By Lemma~\ref{lemma:gamma-vs-admissible}, there is at least one admissible tuple $(F_1,\dots,F_k,x) \in \mathcal{F}(K_\phi)^k \times \mathcal{V}(K_\phi)$. In particular,  $\card \Av(x) \leq k$. On the other hand, we observe from \eqref{eq:facets-Pi} that the polytope $\Pi$ has the property that the avoiding set of any vertex contains exactly $l$ elements. It follows that $l \leq k$.
\end{proof}

\subsection*{Further remarks:}
We finish this section with two remarks about concepts appearing in the previous proof.

\smallskip

(a) It is easy to see that if $K$ is the product of $k$ nontrivial simplices, then $\card (\Av\lb x\rb) = k$ for every vertex $x\in \mathcal{V}(K)$. Conversely, it would be interesting to determine which polytopes $K$ have the property that $\card (\Av\lb x\rb) = k$ for every vertex $x\in \mathcal{V}(K)$. For $k=1$ there are only simplices. It was observed by Martin Winter that this class of polytopes is closed under taking free joins~\cite{winterMO}, and that for $k=2$ this property characterizes iterated free joins of Cartesian products of two nontrivial simplices~\cite{WinterPC}.

\smallskip

(b) The vertex-facet tensor $\omega_k$ from
\eqref{equ:VFTensor} is an interesting object encoding certain combinatorial properties of the polytope $K_\phi$. For example, we have the following proposition:

\begin{proposition}\label{cor:VFtensorInclus}
Consider a polyhedral cone $\gC$ with base $K_\phi=\phi^{-1}(1)\cap \gC$ for some functional $\phi\in \inter(\gC^*)$ and an integer $k \geq 1$. Then, the following are equivalent:
\begin{enumerate}
\item[(i)] We have
\[
\omega_k\in \inter\lb \gC^{\otimes_{\max} k}\otimes_{\max} \gC^*\rb;
\]
\item[(ii)] For every vertex $x \in \mathcal{V}(K_\phi)$, we have $\card (\Av\lb x\rb)>k$.
\end{enumerate}
\end{proposition}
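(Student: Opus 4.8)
The plan is to dualize condition (i) and then recycle the positivity computation \eqref{equ:vFtensPosProd} from the proof of Lemma~\ref{lemma:gamma-vs-admissible}. First I would identify the relevant dual cone. Using the definition of the maximal tensor product together with the bipolar identities $(\gC^{\tmax k})^* = (\gC^*)^{\tmin k}$ and $(\gC^*)^* = \gC$, one obtains
\[ \gC^{\tmax k} \tmax \gC^* = \lb (\gC^*)^{\tmin k} \tmin \gC \rb^* =: D^*, \]
a proper cone since $\gC$ is proper. As $\gC$ is polyhedral, so are $\gC^*$ and the minimal tensor product $D = (\gC^*)^{\tmin k} \tmin \gC$, and $D$ is generated as a cone by the pure tensors $\psi_{F_1} \otimes \dots \otimes \psi_{F_k} \otimes x$ built from the extreme rays of the factors, where $F_1,\dots,F_k \in \mathcal{F}(K_\phi)$ and $x \in \mathcal{V}(K_\phi)$.

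Next I would apply the standard description of the interior of a proper cone, which gives $\omega_k \in \inter(D^*)$ if and only if $\langle z, \omega_k \rangle > 0$ for every nonzero $z \in D$. Since $D$ is finitely generated by the pure tensors above and, by \eqref{equ:vFtensPosProd}, $\omega_k$ pairs nonnegatively with each of them, I would argue that this is equivalent to strict positivity on each generator: any nonzero $z \in D$ is a nonnegative combination of the generators with at least one positive coefficient, so a single strictly positive term forces $\langle z, \omega_k \rangle > 0$. Thus (i) holds if and only if
\[ \langle \psi_{F_1} \otimes \dots \otimes \psi_{F_k} \otimes x, \omega_k \rangle > 0 \qquad \text{for all } (F_1,\dots,F_k,x) \in \mathcal{F}(K_\phi)^k \times \mathcal{V}(K_\phi). \]

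The last step is purely combinatorial. By \eqref{equ:vFtensPosProd}, the pairing above is strictly positive precisely when the tuple $(F_1,\dots,F_k,x)$ is \emph{not} admissible, so (i) holds if and only if there is no admissible tuple at all. For a fixed vertex $x$, an admissible tuple $(F_1,\dots,F_k,x)$ exists if and only if the avoiding set $\Av(x)$ can be covered by $k$ facets, i.e.\ if and only if $\card(\Av(x)) \leq k$: if $\card(\Av(x)) \leq k$ one lists the elements of $\Av(x)$ and pads the list to length $k$, while if $\card(\Av(x)) > k$ no choice of $k$ facets contains $\Av(x)$. Hence no admissible tuple exists if and only if $\card(\Av(x)) > k$ for every $x \in \mathcal{V}(K_\phi)$, which is exactly condition (ii).

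I expect the only delicate point to be the reduction to generators in the second step, namely checking that strict positivity of $\omega_k$ on the finitely many extreme generators of $D$ already forces strict positivity on all nonzero elements; this is where polyhedrality of $\gC$ (and hence finite generation of $D$) is used, together with the uniform sign information supplied by \eqref{equ:vFtensPosProd}. The remainder is a matter of unwinding the tensor-product duality and reusing the computation already carried out for Lemma~\ref{lemma:gamma-vs-admissible}.
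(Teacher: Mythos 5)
Your proposal is correct and follows essentially the same route as the paper: dualizing to the cone $(\gC^*)^{\tmin k}\tmin\gC$, characterizing the interior by strict positivity on its (finitely many) pure-tensor generators, and invoking the computation \eqref{equ:vFtensPosProd} to translate strict positivity into non-admissibility, hence into $\card(\Av(x))>k$ for every vertex. You simply spell out more explicitly the reduction-to-generators step and the covering argument that the paper leaves implicit.
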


\begin{proof}
For any proper cone $\gC_0 \subset V_0$, the interior of $\gC_0^*$ is the set of linear forms $f \in V_0^*$ such that $f(x)>0$ for every $x \in \gC_0\setminus \{0\}$. Using this observation for the cone $\gC_0 = \left(\gC^*\right)^{\tmin k} \tmin \gC$ shows that (i) is equivalent to the statement
\[ \forall F_1,\dots,F_k \in \mathcal{F}(K_\phi),\ \forall x \in \mathcal{V}(K_\phi),\ \exists F \in \mathcal{F}(K_\phi) \setminus \{F_1,\dots,F_k\} \st \psi_F(x) > 0,\]
which is equivalent to (ii).
\end{proof}

Since $\langle \gamma_k^\phi , \omega_k \rangle = 0$ always holds, Proposition \ref{cor:VFtensorInclus} implies directly that $\gamma_k^\phi$ is not entanglement-breaking whenever $\card (\Av\lb x\rb) > k$ for every vertex $x\in \mathcal{V}(K_\phi)$.

\section{Proof of Theorem \ref{theorem:polysimplices}} \label{sec:PolySimpl}

Before proving Theorem \ref{theorem:polysimplices}, we recall some facts about polytopes. Given a polytope $P$ and a vertex $v \in \mathcal{V}(P)$, we denote by $P/v$ the \emph{vertex figure} of $P$ at $v$ (see \cite[Chapter 2.1]{Grunbaum} for definition). A $d$-polytope~$P$ is said to be \emph{simple} if any vertex is contained in exactly $d$ facets, or equivalently if all its vertex figures are $(d-1)$-simplices. A polytope $P$ is said to be $2$-\emph{level} if for every facet $F \in \mathcal{F}(P)$, all vertices not in $F$ lie in the same translate of $\aff(F)$. Since the complement of any facet in the graph of $P$ is connected \cite[p.\ 475]{Sallee67}, $P$ is $2$-level if and only if any face disjoint from any facet $F \in \mathcal{F}(P)$ lies in a translate of $\aff(F)$.

We use the following characterization of products of simplices. For characterizations of products of simplices up to combinatorial equivalence, see \cite{YuMasuda21}.

\begin{theorem}[Kaibel--Wolff, \cite{KaibelWolff00}] \label{theorem:kaibelwolff}
A polytope is affinely equivalent to a product of simplices if and only if it is simple and $2$-level.
\end{theorem}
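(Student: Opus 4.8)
The plan is to prove the two implications separately; the forward direction is a short induction, while the backward direction carries essentially all the difficulty.

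\textbf{Easy direction (product of simplices $\Rightarrow$ simple and $2$-level).} I would induct on the number of factors. A simplex $\Delta$ is simple, since each of its $\dim\Delta+1$ vertices lies in exactly $\dim\Delta$ facets, and it is $2$-level, since each facet omits a single vertex, which trivially lies in a translate of that facet's affine hull. For the inductive step I would use that the facets of $P\times Q$ are exactly the sets $F\times Q$ with $F\in\mathcal{F}(P)$ and $P\times G$ with $G\in\mathcal{F}(Q)$, and that a vertex $(v,w)$ lies in exactly (number of facets of $P$ through $v$) $+$ (number of facets of $Q$ through $w$) facets; simpleness of the factors then yields simpleness of the product because $\dim(P\times Q)=\dim P+\dim Q$. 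For $2$-levelness, the vertices avoiding $F\times Q$ are the $(v',w')$ with $v'\notin F$; by $2$-levelness of $P$ these $v'$ lie in a single translate $\aff(F)+t$, so the points $(v',w')$ lie in the translate $(\aff(F)\times\aff(Q))+(t,0)$ of $\aff(F\times Q)$.

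\textbf{Reducing the hard direction to the $0/1$ case.} Assume $P$ is simple and $2$-level with $\dim P=d$ and facet set $\mathcal F$, $|\mathcal F|=m$. For each facet $F$, $2$-levelness provides an affine functional $\ell_F$ that vanishes on $F$ and equals $1$ on every vertex not in $F$; the affine map $\Psi=(\ell_F)_{F\in\mathcal F}$ then sends each vertex $v$ to the $0/1$ indicator of its avoiding set $\Av(v)$. Since facet functionals separate vertices, $\Psi$ is injective on $\aff(P)$, so $P$ is affinely equivalent to the simple $0/1$-polytope $\Psi(P)$, and simpleness forces $|\Av(v)|=m-d=:l$ to be constant. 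This reduces the theorem to the statement that every simple $0/1$-polytope is a product of $0/1$-simplices.

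\textbf{The combinatorial core.} After a coordinate flip I may assume $0\in\mathcal V(P)$. If $m=d+1$ then $P$ is a simplex and we are done, so assume $l\ge2$. The $d$ facets through the simple vertex $0$ correspond to its $d$ edges; moving along an edge exchanges exactly one avoided facet for one non-avoided facet, so each edge determines the facet in $\Av(0)=\{G_1,\dots,G_l\}$ that ceases to be avoided. Grouping the facets through $0$ by this assignment partitions $\mathcal F$ into $l$ classes $\mathcal C_1,\dots,\mathcal C_l$, each of the form $\{G_i\}\cup(\text{facets through }0\text{ pointing at }G_i)$. The plan is then to show that these classes satisfy $\sum_{F\in\mathcal C_i}\ell_F=\mathbf 1$ identically for every $i$; granting this, $\Psi$ realizes $P$ directly as the product $\prod_{i=1}^{l}\Delta_{|\mathcal C_i|-1}$ of simplices, because within each block $\{\ell_F:F\in\mathcal C_i\}$ the functionals are nonnegative on $P$ and sum to $1$ (hence range over a simplex), while distinct blocks vary independently. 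No further induction would then be needed.

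\textbf{The main obstacle.} The hard part will be exactly this splitting step: the relations $\sum_{F\in\mathcal C_i}\ell_F=\mathbf 1$ hold at the base vertex $0$ by construction, and I must show they propagate globally. The natural route is to prove that the class of a facet is preserved as one walks along the edge-graph of $P$: crossing an edge exchanges only two facets, and one argues --- using simpleness (each edge lies in exactly $d-1$ facets) together with $2$-levelness --- that the two exchanged facets always lie in a common class. Since the graph of $P$ is connected, the partition would then be independent of the base vertex, and the constant count $|\Av(v)|=l$ would force each $\Av(v)$ to meet every class exactly once, which is the combinatorial signature of a product of $l$ simplices. Verifying that edge-exchanges respect the classes --- equivalently, that the simplex factor read off at one vertex closes up consistently everywhere --- is the genuinely difficult point and constitutes the heart of the Kaibel--Wolff argument.
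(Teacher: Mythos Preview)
The paper does not give its own proof of this theorem: it is quoted as a result of Kaibel--Wolff, and the only argument supplied is the one-line reduction that a $2$-level polytope is affinely equivalent to a $0/1$-polytope (citing \cite{GPT10}), after which the Kaibel--Wolff theorem on simple $0/1$-polytopes finishes. Your reduction step is exactly this: your map $\Psi=(\ell_F)_F$ is precisely the standard embedding showing ``$2$-level $\Rightarrow$ affinely $0/1$'', so on the part the paper actually addresses you are in agreement.

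Where you differ is that you go further and try to sketch the Kaibel--Wolff argument itself (partitioning the facets into classes $\mathcal C_i$ and proving the relations $\sum_{F\in\mathcal C_i}\ell_F\equiv 1$). The paper makes no attempt at this and simply cites \cite{KaibelWolff00}. Your outline of that argument is plausible --- grouping facets by the ``exchange'' behaviour along edges at a base vertex and then propagating along the edge graph is indeed the shape of the proof --- and you correctly flag the genuine difficulty: showing that the two facets swapped along any edge lie in the same class, so that the partition is base-point independent. That step is not a formality and you have not supplied it; as written your proposal is an honest sketch with an acknowledged gap rather than a complete proof. For the purposes of this paper, however, nothing beyond the reduction you already carried out is needed, since the rest is delegated to \cite{KaibelWolff00}.
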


A polytope $P \subset \R^n$ is said to be a \emph{0/1-polytope} if its vertices are a subset of $\{0,1\}^n$. Kaibel and Wolff proved in \cite{KaibelWolff00} that a simple 0/1-polytope is a product of simplices. Since any $2$-level polytope is affinely equivalent to a 0/1-polytope (see~\cite{GPT10}), this implies Theorem \ref{theorem:kaibelwolff}.

\begin{proof}[Proof of Theorem \ref{theorem:polysimplices}]
The fact that a simplex satisfies (ii) is a consequence of the following observation: if $(x_i)_{i \in I}$ are affinely independent, then for every $J_1$, $J_2 \subset I$, we have
\[ \aff(\{x_i \st i \in J_1 \cap J_2\}) = \aff(\{x_i \st i \in J_1\}) \cap \aff(\{x_i \st i \in J_2\}). \]
By induction, to prove (i) $\Longrightarrow$ (ii), it suffices to prove that the class of polytopes satisfying (ii) is closed under products. Let $P$ and $Q$ be polytopes satisfying (ii) and consider a family $(F_i)_{i \in I}$ of faces of $P \times Q$. There are faces $(G_i)_{i \in I}$ of $P$ and $(H_i)_{i \in I}$ of $Q$ such that $F_i=G_i\times H_i$ for every $i \in I$. Using the relation $\aff(A \times B) = \aff(A) \times \aff(B)$ whenever $A$, $B$ are subsets of vector spaces, we obtain
\begin{multline*}
\aff \left(\bigcap_{i \in I} F_i \right)  =  \aff \left( \bigcap_{i\in I} G_i \times \bigcap_{i\in I} H_i \right)   =  \aff \left( \bigcap_{i\in I} G_i \right) \times \left( \bigcap_{i\in I} H_i \right) \\
 = \left(\bigcap_{i \in I} \aff(G_i) \right) \times \left( \bigcap_{i \in I} \aff(H_i) \right)  = \bigcap_{i \in I} \aff(G_i) \times \aff(H_i) = \bigcap_{i \in I} \aff(F_i)
\end{multline*}
and the implication (i) $\Longrightarrow$ (ii) follows.

Conversely, let $P$ be an $n$-polytope satisfying (ii). We first show that $P$ is simple. Consider a vertex $x \in \mathcal{V}(P)$ and facets $F_1,\dots,F_{n-1} \in \mathcal{F}(P)$ containing~$x$. The set $\bigcap \aff(F_i)$ is the nonempty intersection of $n-1$ affine hyperplanes, hence has dimension $\geq 1$. By \eqref{eq:aff}, the face $\bigcap F_i$ has dimension $\geq 1$. By the 1-1 correspondence between faces of $P$ containing $x$ and faces of $P/x$ (see \cite[Proposition 2.4]{Grunbaum}), the $(n-1)$-polytope $P/x$ is dual-$(n-1)$-neighbourly, i.e., any $n-1$ facets have a common point. Since a $d$-polytope which is dual-$k$-neighbourly for $k> \lfloor d/2 \rfloor$ is a simplex (see \cite[p.~123]{Grunbaum}), it follows that $P/x$ is a simplex for every $x \in \mathcal{V}(P)$ and therefore $P$ is simple.

Let $F \in \mathcal{F}(P)$ and $G$ be a face of $P$ such that $F \cap G = \emptyset$. By \eqref{eq:aff}, we have $\aff(F) \cap \aff(G) = \emptyset$ and therefore $G$ lies in a translate of $\aff(F)$. This means that $P$ is $2$-level. By Theorem \ref{theorem:kaibelwolff}, it follows that $P$ is affinely equivalent to a product of simplices.
\end{proof}


\subsection*{Acknowledgements} GA was supported in part by ANR (France) under the grant ESQuisses (ANR-20-CE47-0014-01). AMH acknowledges funding from the European Union's Horizon 2020 research and innovation programme under the Marie Sk\l odowska-Curie Action TIPTOP (grant no.\ 843414). MP was supported by the Deutsche Forschungsgemeinschaft (DFG, German Research Foundation, project numbers 447948357 and 440958198), the Sino-German Center for Research Promotion (Project M-0294), the ERC (Consolidator Grant 683107/TempoQ) and the Alexander von Humboldt Foundation. We thank Ludovico Lami and Christophe Sabot for useful discussions.

\subsection*{Conflicts of interest} The authors declare no conflict of interest.

\appendix \label{app:quantum}

\section{Proof of Proposition \ref{prop:quantum}}

We denote by $(\ket{1},\ket{2},\ket{3})$ the canonical basis of $\CC^3$. As common in the quantum information theory literature, we write $\ket{ij}$ as a shortcut for $\ket{i} \otimes \ket{j}$. Consider the operators 
\begin{eqnarray*} X_{\alpha,\beta,\gamma} 
&= &\alpha \sum_i \ketbra{ii}{ii} + \beta \sum_{i\neq j} \ketbra{ij}{ij} + \gamma \sum_{i \neq j} \ketbra{ii}{jj} \\
& = & { \footnotesize\begin{pmatrix} 
\alpha & & & & \gamma & & & & \gamma \\
& \beta & & & & & & & \\
& & \beta & & & & & & \\
& & & \beta & & & & & \\
\gamma & & & & \alpha & & & & \gamma \\
& & & & & \beta & & & \\
& & & & & & \beta & & \\
& & & & & & & \beta & \\
\gamma & & & & \gamma & & & & \alpha 
\end{pmatrix}.}
\end{eqnarray*}
with parameters $\alpha$, $\beta$, $\gamma\in \R$.
This operator is positive semidefinite if $\beta \geq 0$ and $\alpha \geq \gamma \geq 0$. Set $\eta = 1-\sqrt{2}/2$. Consider the operator
\[ Y = X_{1,\eta,1} = \frac{1}{4} X_{4,1,2\sqrt{2}+1} + \frac{3-2\sqrt{2}}{4} X_{0,1,1}\in (\M_3\otimes \M_3)^+
\]
Observe the following facts.
\begin{itemize}
    \item The operator $X_{4,1,2\sqrt{2}+1}$ is $2$-PSD-extendible, an extension being given by $\ketbra{\psi_1}{\psi_1}+\ketbra{\psi_2}{\psi_2}+\ketbra{\psi_3}{\psi_3}$, with
    \begin{align*} \ket{\psi_1} = \sqrt{2} \ket{111} + \ket{212} + \ket{221} + \ket{313} + \ket{331} , \\
    \ket{\psi_2} = \sqrt{2} \ket{222} + \ket{323} + \ket{332} + \ket{121} + \ket{112} ,\\
     \ket{\psi_3} = \sqrt{2} \ket{333} + \ket{131} + \ket{113} + \ket{232} + \ket{223} .
     \end{align*}
    \item The operator $X_{0,1,1}$ is $2$-max-extendible, since its partial transpose is $2$-PSD-extendible, an extension being given by $\ketbra{\psi}{\psi}$, with
    \[ \ket{\psi} =  \ket{123} + \ket{132} + \ket{213} + \ket{231} + \ket{312} + \ket{321}.\]
    \item By the two previous points, the operator $Y$ is $2$-max-extendible as a positive combination of $2$-max-extendible operators.
    \item To show that $Y$ is not 2-PSD-extendible, consider the operator $W = X_{1,\eta,-2\eta}$. It is easy to check numerically that the operator 
    \[ W_2 = \lb\Id_{\M_3} \otimes P_{\Sym_2(\M_3)} \rb (W \otimes {\bf 1}_3) \]
    is positive definite. Assume that $Y$ is $2$-PSD-extendible, with extension $Y_2 \in \lb\M_3^{\otimes 3}\rb^+$. We would then have
    \[ \tr(Y_2W_2)=\tr(YW) = 3 \cdot 1 + 6 \cdot \eta^2 + 6 \cdot (-2\eta) = 0, \]
    showing that $Y_2=0$, leading to a contradiction.
\end{itemize}

\bibliography{monogamy-GPT}{}
\bibliographystyle{plain}

\end{document}